\newtheorem{myth}{\bf Theorem}
\newtheorem{myle}[myth]{\bf Lemma}
\theoremstyle{definition}
\newtheorem{mydef}{\bf Definition}
\newtheorem{myexam}{\bf Example}
\newcommand{\bx}{\boldsymbol x}
\newcommand{\bzero}{\boldsymbol 0}
\newcommand{\bone}{\boldsymbol 1}
\newcommand{\IEVq}{$\mathsf{IEV}_q$}
\newcommand{\IEV}{$\mathsf{IEV}_2$}
\newcommand{\MAXIEVq}{$\mathsf{Max}$-$\mathsf{IEV}_q$}
\begin{document}

\graphicspath{{figures/}}

\title{Linear Network Code for Erasure Broadcast Channel with Feedback: Complexity and Algorithms}

\author{\large Chi Wan Sung, Linyu Huang, Ho Yuet Kwan, and Kenneth W. Shum}

\date{}

%
\long\def\symbolfootnote[#1]#2{\begingroup
\def\thefootnote{\fnsymbol{footnote}}\footnote[#1]{#2}\endgroup}

\maketitle

\begin{abstract}
 This paper investigates the construction of linear network codes for broadcasting a set of data packets to a number of users. The links from the source to the users are modeled as independent erasure channels. Users are allowed to inform the source node whether a packet is received correctly via feedback channels. In order to minimize the number of packet transmissions until all users have received all packets successfully, it is necessary that a data packet, if successfully received by a user, can increase the dimension of the vector space spanned by the encoding vectors he or she has received by one. Such an encoding vector is called innovative. We prove that innovative linear network code is uniformly optimal in minimizing user  download delay. On the other hand, innovative encoding vectors do not always exist. When the finite field size is strictly smaller than the number of users, the problem of determining the existence of innovative vectors is proven to be NP-complete. When the field size is larger than or equal to the number of users, innovative vectors always exist and random linear network code (RLNC) is able to find an innovative vector with high probability. While RLNC is optimal in terms of completion time, it has high decoding complexity due to the need of solving a system of linear equations. To reduce decoding time, we propose the use of sparse linear network code, since the sparsity property of encoding vectors can be exploited when solving systems of linear equations. Generating a sparsest encoding vector with large finite field size, however, is shown to be NP-hard. An approximation algorithm that guarantee the Hamming weight of a generated encoding vector to be smaller than a certain factor of the optimal value is constructed. For the binary field, a heuristic algorithm is also proposed. Our simulation results show that our proposed methods have excellent performance in completion time and outperforms RLNC in terms of decoding time. This improvement is obtained at the expense of higher encoding time and the requirement of user feedback. Comparisons with other broadcast codes have been made and numerical results show that different tradeoff can be obtained by these schemes.
\end{abstract}

\begin{keywords}
Erasure broadcast channel, innovative encoding vector, sparse network code, computational complexity.
\end{keywords}

\symbolfootnote[0]{This work was supported in part by a grant from the University Grants Committee of the Hong Kong Special Administrative Region, China under Project AoE/E-02/08, and in part by a grant from the Research Grants Council of the Hong Kong Special Administrative Region, China under Project CityU 121713.}
\symbolfootnote[0]{Chi Wan Sung and Linyu Huang are with Department of Electronic Engineering, City University of Hong Kong, Tat Chee Ave, Kowloon Tong, Hong Kong. Email: albert.sung@cityu.edu.hk, l.huang@my.cityu.edu.hk}
\symbolfootnote[0]{Ho Yuet Kwan is with Division of Applied Science and Technology, Community College of City University, Tat Chee Ave, Kowloon Tong, Hong Kong. Email: hykwan@cityu.edu.hk.}
\symbolfootnote[0]{Kenneth Shum is with Institute of Network Coding, The Chinese University of Hong Kong, Shatin, Hong Kong. Email: wkshum@inc.cuhk.edu.hk.}
\symbolfootnote[0]{This paper was  presented in part in Int. Symp. on Inform. Theory, St. Petersburg, 2011,  and in Int. Symp. in Network Coding, Beijing,  2011.}

\section{Introduction}

Broadcasting has been a challenging issue in telecommunications. The challenge mainly comes from how a transmitter can disseminate a common information content to all users/receivers reliably and efficiently via a broadcast channel which could be unstable and error-prone. More specifically, one of the ultimate goal of broadcasting is to provide a transmission scheme such that a common information content or a set of packets can be disseminated with minimum number of transmissions for a sender to complete the whole information content dissemination for all users. This measure is commonly called the {\em completion time} of a broadcast system.

Several classical approaches provide heuristic solutions to the above issue. With user feedback, automatic repeat request (ARQ) offers reliable retransmissions for the erased packets due to channel impairments. However, such an approach becomes inefficient when the number of users increases, as the users may have entirely distinct needs for the erased packets. Reliable broadcasting can also be achieved without user feedback by forward error correction. With the use of erasure codes, a user can reconstruct the entire set of original packets, provided that the number of erased packets is smaller than a certain threshold. However, the amount of packets that will be erased depends on the channel erasure probability, which is time-varying and hard to predict. That limits the use of erasure codes in broadcasting. To improve upon classical approaches, the approach of linear network coding \cite{LNC1,LNC2} has been shown to be a promising solution \cite{EOM06,GTK08,HPFL09,DFK09}.


The idea of linear network coding for broadcasting is that a transmitter broadcasts to $K$ users encoded packets that are obtained by linearly combining the $N$ original packets over the finite field $GF(q)$. An encoding vector specifies the coefficients for the linear combination. An encoded packet together with a header which contains the corresponding encoding vector is broadcasted to all users. It is said to be {\em innovative to a user} if the corresponding encoding vector is not in the subspace spanned by the encoding vectors already received by that user. It is called {\em innovative} if it is innovative to all users who have not yet received enough packets for decoding. It is shown in \cite{DFT07} that an innovative packet can always be found if $q \geq K$. Once a user receives any $N$ innovative packets, he or she can decode the $N$ original packets by Gauss-Jordan elimination. Therefore, the generation of innovative packets is vital. Clearly, if all the encoded packets are innovative, the completion time can then be minimized.

Linear network codes for broadcasting can be generated with or without feedback. LT code \cite{LT}, Raptor code \cite{Raptor} and random linear network code (RLNC) \cite{Ho} can be used without feedback. By suitably choosing design parameters, innovative packets can be generated by those coding schemes with high probability. LT code and Raptor code are generated by an optimized degree distribution. However, they are mainly designed for broadcasting a huge number of packets, and may not be good choices when the number of packets is only moderately large. With feedback, it is suggested in \cite{DFT07} the use of Jaggi-Sanders algorithm \cite{JSCEEJT}, which is a general network code generation method and is able to find innovative encoding vectors for $q\geq K$. However, its encoding and decoding complexities are relatively high, as it is not specifically designed for the broadcast application. Therefore, some heuristics have been proposed \cite{SSM08,SSM09, NNB09,Xiao2010}. It is suggested in \cite{ID2} that encoded packets should be {\em instantly decodable}, in the sense that a new packet can be decoded once it is available at a receiver without waiting for the complete reception of the full set of packets. However, as an instantly decodable packet to all users may not exist, the completion time is in general larger than that in a system without this extra requirement. With the idea of instantly decodability, some works focus on minimizing {\em decoding delay}, where a unit of decoding delay is defined as that an encoded packet is successfully received by a user but that packet is not innovative or not instantly decodable to him or her \cite{KDF08,CMWB,ID1,SST10}.

The excellent performance of linear network coding to broadcast encourages researchers to consider its practicality. In fact, the decoding complexity of a linear network code is an important issue in practice. One possible way to reduce decoding complexity is to use sparse encoding vectors. This sparsity property is important, as it can be exploited in the decoding process. For example, a fast algorithm by Wiedemann for solving a system of sparse linear equations can be used~\cite{Wiedemann86}. If the Hamming weight of each encoding vector is at most $w$, the complexity for solving an $N\times N$ linear system can be reduced from $O(N^3)$ using Gaussian elimination to $O(wN^2)$ \cite{sungonthesparsity2011}. The Wiedemann algorithm is useful when $N$ is large. When $N$ is moderate, we can implement some sparse representation of matrices, so that even if the usual Gaussian elimination is used, the number of additions and multiplications required can be reduced. For other fast methods for solving linear equations over finite fields, we refer the readers to~\cite{Kaltofen,Coppersmith}.

Minimizing the completion time and reducing the decoding complexity are equally important in linear network code design for erasure broadcast channel. However, the innovativeness of encoding vectors together with their sparsity has not been thoroughly studied. Given the encoding vectors which have been received by the users in a broadcast system, the generation of encoding vectors which are both sparse and innovative is a challenging problem. In this paper, we address the issue by developing a method called the {\em Optimal Hitting Method} and its approximation version, called the {\em Greedy Hitting Method}. Both of them are able to generate sparse and innovative encoding vectors for $q \geq K$. That results in a significant reduction in decoding complexity when compared with their non-sparse counterparts. Furthermore, based on the greedy hitting method, we develop a suboptimal procedure to improve the completion time performance for $q=2$, where the existence of innovative encoding vectors is not guaranteed. Simulation results show that its performance is nearly optimal.

The rest of this paper is organized as follows. We review the literature on complexity in network coding in Section~\ref{sec:related} and some useful notions in complexity theory in Section~\ref{sec:notions}. In Section~\ref{sec:formulation}, the system model is introduced and the problem is formulated.
In Section~\ref{sec:lnc}, we show that innovative linear network code is uniformly optimal. In Section~\ref{sec:secinno}, we characterize innovative encoding vectors by a linear algebra approach and prove that the determination of the existence of an innovative vector for $q<K$ is NP-complete. In Section~\ref{sec:sparsity}, the sparsity issue is considered. After showing that $K$-sparse innovative vectors always exist if $q \geq K$, we investigate the {\sc Sparsity} problem and prove that it is NP-complete. In Section~\ref{sec:IVGA}, we present a systematic way to solve {\sc Max Sparsity} using binary integer programming. A polynomial-time approximation algorithm is also constructed. In Section~\ref{sec:benchmark}, some benchmark algorithms for wireless broadcast are described. In Section~\ref{sec:PE}, our algorithms are compared with those benchmarks by simulations. Finally, conclusions are drawn in Section~\ref{sec:conclude}.

\section{Literature on Complexity Classes of Network Coding Problems} \label{sec:related}

A considerable amount of research has been done on the complexity issues in conventional coding theory (See the survey in~\cite{Barg} for example). For instance, it is shown in \cite{BMT78} and \cite{Vardy97} that the problems of finding the weight distribution and the minimum distance of linear codes are NP-hard. The complexity issues in network coding are less well understood.

For linear network codes, Lehman and Lehman investigated the complexity of a class of network coding problems in~\cite{Lehman}, and proved that some of the problems are NP-complete. Construction of linear network codes using a technique called matrix completion is considered in~\cite{HKM05}, and the complexity class of the matrix completion problem is studied in~\cite{HKY06}. It is shown in~\cite{LS11} that approximating the capacity of network coding is also a hard problem.

To minimize encoding complexity, Langberg, Sprintson and Bruck divide the nodes in a general network topology into two classes. The nodes in one class forward packets without any coding while the nodes in another class perform network coding. The problem of minimizing the number of encoding nodes is shown to be NP-complete in~\cite{LSJ06, LSJ09}.

El Rouayheb, Chaudhry and Sprintson study the complexity of a related problem called {\em index coding problem} in~\cite{ERCS07}. They consider the noiseless broadcast channel, and show that when the coefficient field is binary, the problem of minimizing the number of packet transmissions is NP-hard. A complementary version of the index coding is studied in~\cite{CASL}. It is shown that the complementary index coding is NP-hard, and even obtaining an approximate solution is NP-hard.

In~\cite{MPRG}, Milosavljevic {\em et al.} studies a related system. The users are interested in a common data file but only have partial knowledge of the file. By interactively sending data to each others through a noiseless broadcast channel, the users want to minimize the total amount of data sent through the channel. It is shown in~\cite{MPRG} that the optimal rate allocations can be found in polynomial time.

In this paper, the problem setting is similar, except that the channel is modeled as an erasure broadcast channel, and we focus on the innovativeness and sparsity aspects of generating encoding vectors.

\section{Useful Notions in Complexity Theory} \label{sec:notions}

Before presenting the broadcast problem, we first define some useful notions in complexity theory, which will be used in this paper. The following definitions are taken from~\cite{goldreich}:

\begin{mydef}
Let $\{0,1\}^*$ denote the set of all binary strings, and
$S$ be a subset of $\{0, 1\}^*$. A function $f : \{0, 1\}^* \rightarrow \{0,1\}$ is said to solve the {\em decision problem} of $S$ if for every binary string $x$ it holds that $f(x) = 1$ if and only if $x \in S$.
\end{mydef}

\begin{mydef}
For a given $R \subseteq\{0,1\}^* \times \{0, 1\}^*$, let $R(x) \triangleq \{y : (x,y) \in R\}$ denote the set of solutions for the binary string $x$. A function $f : \{0,1\}^* \rightarrow \{0,1\}^* \cup \{\perp\}$ is said to solve the {\em search problem} of $R$ if for every $x$ the following holds:
\[
 f(x) \begin{cases}
 \in R(x) & \text{if } R(x) \neq \emptyset, \\
  =\ \perp & \text{otherwise}.
 \end{cases}
\]
\end{mydef}

Note that a minimization problem can be regarded as a search problem. By definition, a minimization problem is associated with a value function $V: \{0,1\}^* \times \{0,1\}^* \rightarrow \mathbb{R}$. Given $x$, the task is to find $y$ such that $(x,y) \in R$ and $V(x,y)$ is the minimum value of $V(x,y')$ for all $y' \in R(x)$.

The following two definitions concerning reductions between two problems:

\begin{mydef}
For $S$, $S'\subseteq \{0,1\}^*$,
A polynomial-time computable function $f:S\rightarrow S'$ is called a {\em Karp-reduction} of $S$ to $S'$ if, for every binary string $x$, it holds that $x \in S$ if and only if $f(x) \in S'$.
\end{mydef}

\begin{mydef} \label{def:levin}
For $R$, $R' \subseteq \{0,1\}^*\times \{0,1\}^*$,
a pair of polynomial-time computable functions,
\begin{align*}
f & : \{0,1\}^* \rightarrow \{0,1\}^*, \\
g & : \{0,1\}^* \times \{0,1\}^* \rightarrow \{0,1\}^*,
\end{align*}
is called a {\em Levin-reduction} of $R$ to $R'$ if the function $f$ is a Karp-reduction of $S_R \triangleq \{x : \exists y \text{ s.t. } (x,y) \in R\}$ to $S_{R'} \triangleq \{x' : \exists y' \text{ s.t. } (x', y') \in R'\}$, and for every $x\in S_R$ and $ (f(x),y') \in R'$ it holds that $(x,g(x,y')) \in R$.
\end{mydef}

Detailed explanations of the above concepts can be found in~\cite{goldreich}.

\section{System Model and Problem Formulation} \label{sec:formulation}

Consider a single-hop wireless broadcast system, in which there are one source and $K$ users. The source wants to send $N$ data packets to all the $K$ users. We view each packet as a symbol from an alphabet set $\mathcal{X}$ of size $q$. In other words, the source wants to broadcast $N$ symbols, $P_1, P_2, \ldots, P_N \in \mathcal{X}$. We assume that they are independent random variables, each of which is drawn uniformly at random from $\mathcal{X}$.

We model the transmission as a time-slotted broadcast erasure channel. In time slot~$t$, a symbol $X_t \in \mathcal{X}$ is transmitted by the source. The channel output observed by user~$k$, denoted by $Y_{k,t}$, is either the same as $X_t$ or equal to a special erasure symbol $e$.
A time slot is called a non-erasure slot of user $k$ at time $t$ if $X_t = Y_{k,t}$, and is called an erasure slot of user $k$ otherwise.
The channel dynamics is modeled by a stochastic sequence,
\[\Psi \triangleq \big( (S_{1,t}, S_{2,t}, \ldots, S_{K,t}) \big)_{t=1,2,3,\ldots},\]
where $S_{k,t}$ equals one if $Y_{k,t} = X_t$ or zero if $Y_{k,t} = e$. Assume that $S_{k,t}$'s are all independent of the source symbols $P_1, P_2, \ldots, P_N$. For $k = 1, 2, \ldots, K$, we let $N_k(t,\Psi)$ be the number of non-erasure slots of user~$k$ in the first $t$ time slots. We let $\Psi_T$ be the truncated sequence obtained from $\Psi$ by preserving the $K$ random variables in the first $T$ time slots. After every slot~$t$, user~$k$ broadcasts $S_{k,t}$ via a control channel without delay and error. We assume that after time $\tau$, the source and all users have the knowledge of $\Psi_\tau$.

Define $\mathcal{Y} \triangleq \mathcal{X} \cup \{e\}$.
An $(N,K,q)$ broadcast code is defined by encoding functions
\begin{equation}
f_t: \mathcal{X}^N \times \{0,1\}^{K(t-1)} \rightarrow \mathcal{X},
\end{equation}
and decoding functions
\begin{equation} \label{decode_def}
g_{k,t}: \mathcal{Y}^t \times \{0,1\}^{Kt} \rightarrow \mathcal{X}^N,
\end{equation}
where $k = 1, 2, \ldots, K$ and $t = 1, 2, \ldots$.

Given a broadcast code and a realization of the channel dynamics $\Psi$, user~$k$ is said to have {\em download delay}~$T_k(\Psi)$ if it is the smallest value of $t$ such that decoding is successful, that is,
\begin{equation} \label{decoding}
g_{k,t}(Y_{k,1}, Y_{k,2}, \ldots, Y_{k,t}, \Psi_t) = (P_1, P_2, \ldots, P_N).
\end{equation}
If decoding is never successful, then we let the download delay be infinity.

The following result gives a lower bound of the download delay of each user:

\begin{myth}
Given any $(N,K,q)$ broadcast code and any channel realization $\Psi$, we have
$T_k(\Psi) > \tau$ for all $\tau$ such that $N_k(\tau,\Psi)<N$, for all $k = 1, 2, \ldots, K$.
\end{myth}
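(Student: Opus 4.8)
The statement is an information-theoretic impossibility result: before a user has received $N$ non-erasure slots, it cannot possibly have decoded all $N$ packets. The plan is a counting/entropy argument showing that the information available to user $k$ after $\tau$ slots is insufficient to determine $(P_1,\ldots,P_N)$ when $N_k(\tau,\Psi)<N$.

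First I would fix $k$ and a channel realization $\Psi$, and let $\tau$ be any time with $N_k(\tau,\Psi)=:m<N$. I would consider the random variables observed by user $k$ up to time $\tau$: the channel outputs $Y_{k,1},\ldots,Y_{k,\tau}$ together with the feedback state $\Psi_\tau$. Since $\Psi$ is fixed and independent of the packets, the only packet-dependent information user $k$ holds lies in the $m$ non-erasure observations; in each such slot $t$, $Y_{k,t}=X_t=f_t(P_1,\ldots,P_N,\Psi_{t-1})$, so $Y_{k,t}$ is a deterministic function of $(P_1,\ldots,P_N)$ (the feedback bits $\Psi_{t-1}$ being fixed). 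Thus the whole view of user $k$ is a deterministic function $\phi$ of $(P_1,\ldots,P_N)$ whose range has size at most $q^{m}$.

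Next I would argue that if decoding succeeded at time $\tau$, i.e. $g_{k,\tau}(Y_{k,1},\ldots,Y_{k,\tau},\Psi_\tau)=(P_1,\ldots,P_N)$, then $(P_1,\ldots,P_N)$ is itself a deterministic function of user $k$'s view, hence a function of $\phi(P_1,\ldots,P_N)$. This would force the map $(P_1,\ldots,P_N)\mapsto\phi(P_1,\ldots,P_N)$ to be injective on $\mathcal{X}^N$, which is impossible because its domain has size $q^N$ and its range has size at most $q^m<q^N$. Concretely, two distinct packet tuples $\bp\neq\bp'$ must collide under $\phi$; for such a pair the decoder produces the same output on both, so it is wrong on at least one of them, contradicting \eqref{decoding} holding for every realization. (Equivalently, one can phrase this with Shannon entropy: $H(P_1,\ldots,P_N \mid \text{view}) \ge N\log q - m\log q > 0$, so the view does not determine the packets.) This yields $T_k(\Psi)>\tau$.

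The only subtlety — and the point that needs care rather than real difficulty — is bookkeeping the feedback: the encoding function $f_t$ depends on $\{0,1\}^{K(t-1)}$, i.e. on the past channel states of \emph{all} users, and the decoder $g_{k,t}$ depends on $\Psi_t$. Because $\Psi$ is assumed independent of the packets and is common knowledge, once we condition on the fixed realization $\Psi$ all these feedback arguments become constants, so $X_t$ and $Y_{k,t}$ remain deterministic functions of $(P_1,\ldots,P_N)$ alone and the counting bound $q^m$ on the number of distinct possible views is unaffected. I would make this conditioning explicit at the start so the rest of the argument is a clean pigeonhole count. No step here is a genuine obstacle; the result is essentially the observation that $m$ field symbols cannot carry $N>m$ symbols' worth of information.
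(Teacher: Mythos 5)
Your proposal is correct and is essentially the paper's own argument: the paper fixes $\Psi$, notes that the user's packet-dependent observations live in at most $N_k(\tau,\Psi)$ field symbols, and runs the entropy chain $H(P_1,\ldots,P_N \mid Y_{k,1},\ldots,Y_{k,\tau},\Psi_\tau) \geq (N-N_k(\tau,\Psi))\log_2 q > 0$ to conclude that the decoding condition cannot hold with probability one. You reach the same conclusion by the equivalent pigeonhole count on the at most $q^{N_k(\tau,\Psi)}$ possible views (and you note the entropy phrasing yourself), with the same correct handling of the feedback arguments as constants once $\Psi$ is fixed.
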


\begin{proof}
Consider a time index $\tau$, where $N_k(\tau,\Psi) < N$. Let $a_1, a_2, \ldots, a_{N_k(\tau)} \leq \tau$ be the indices of time slots at which user~$k$ experiences no erasure, and let $Y_k \triangleq (Y_{k,a_1}, Y_{k,a_2}, \ldots, Y_{k,a_{N_k(\tau)}})$. Note that
\begin{align*}
& H(P_1, P_2, \ldots, P_N | Y_{k,1}, Y_{k,2}, \ldots, Y_{k,\tau}, \Psi_\tau) \\
&= H(P_1, P_2, \ldots, P_N | Y_k) \\
&= H(P_1, P_2, \ldots, P_N) - \big[ H(Y_k) - H(Y_k | P_1, P_2, \ldots, P_N) \big] \\
&\geq H(P_1, P_2, \ldots, P_N) - H(Y_k) \\
&= N \log_2 |\mathcal{X}| - H(Y_k) \\
&\geq N \log_2 |\mathcal{X}| - N_k(\tau,\Psi) \log_2 |\mathcal{X}| \\
&= (N- N_k(\tau,\Psi)) \log_2 q \\
&> 0
\end{align*}
Therefore, the probability that the decoding condition in~\eqref{decoding} holds must be strictly less than one. In other words, the download delay of user~$k$, $T_k(\Psi)$, must be strictly greater than~$\tau$, for all $k$'s.
\end{proof}

\begin{mydef}
An $(N,K,q)$ broadcast code is said to be uniformly optimal if for any channel realization $\Psi$ and $k = 1, 2, \ldots, K$,
\begin{equation} \label{uoptimal}
T_k(\Psi) = \min \{ \tau : N_k(\tau, \Psi) = N \}.
\end{equation}
If the minimum does not exist, we define it as infinity.
\end{mydef}

The existence of uniformly optimal broadcast code will be investigated in the next two sections.

\section{Linear Network Code} \label{sec:lnc}

In this paper, we focus on the use of linear network code. The alphabet set $\mathcal{X}$ is identified with the finite field $GF(q)$ of size $q$, for some prime power~$q$. We define linear network code formally below:

\begin{mydef}
An $(N,K,q)$ broadcast code is said to be a linear network code if its encoding functions can be expressed as a linear function of the source packets:
\begin{equation} \label{LNC}
f_t(P_1, P_2, \ldots, P_N, \Psi_{t-1}) = x_1 P_1 + x_2 P_2 + \cdots + x_N P_N,
\end{equation}
where $x_1, x_2, \ldots, x_N \in \mathcal{X}$ are determined by $\Psi_{t-1}$, and the addition and multiplication operations are defined over $GF(q)$.
\end{mydef}

The vector $\mathbf{x} \triangleq (x_1, x_2, \ldots, x_N) \in GF(q)^N$, as expressed in~\eqref{LNC}, is called the {\em encoding vector} of the packet transmitted in slot~$t$. Throughout this paper, all vectors are assumed to be column vectors, and we use parenthesis and commas when its components are listed horizontally.

For practical applications, the transmitter can put the encoding vector in the header of the encoded packet. While that incurs some transmission overhead, it can relax the requirement specified in the previous section that every user can listen to the feedback information from all other users. In other words, the decoding function of user $k$ in~\eqref{decode_def} can be changed to
\begin{equation}
g_{k,t}: \mathcal{Y}^t \times GF(q)^{Nt} \rightarrow \mathcal{X}^N,
\end{equation}
assuming that the decoder knows the encoding vectors of its received packets.

The {\em support} of the vector ${\bf x}$, denoted by $supp({\bf x})$, is the set of indices of the non-zero components in ${\bf x}$, i.e.,
\[
 supp({\bf x}) \triangleq \{i :\, x_i \neq 0 \}.
\]
The {\em Hamming weight} of ${\bf x}$ is defined as the cardinality of  $supp({\bf x})$. An encoding vector that has Hamming weight less than or equal to $w$ is said to be $w$-sparse.

Note that a transmitted packet brings new information to a user if and only if its entropy conditioned on the perviously received packets by that user is greater than zero, or equivalently, the new packet is not a function of the previously received packet. In linear algebraic terms, the condition is that the encoding vector of the new packet does not lie within the span of all previously received encoding vectors of that user. We say that such an encoding vector is {\em innovative to that user}. An encoding vector that is innovative to all users is simply said to be {\em innovative}.

Suppose that user~$k$, for $k=1,2,\ldots,K$, has already received $r_k$ packets whose encoding vectors are linearly independent. Let $\mathbf{C}_k$ be the $r_k \times N$ encoding matrix of user~$k$, whose rows are the transposes of the $r_k$ encoding vectors. Without loss of generality, we assume that $r_k < N$, for otherwise user~$k$ can decode the file successfully and can be omitted from our consideration. A vector $\mathbf{x}$ is innovative if it does not belong to the row space of $\mathbf{C}_k$ for any $k$. Given $K$ encoding matrices $\mathbf{C}_1, \mathbf{C}_2, \ldots, \mathbf{C}_K$, the set of all innovative encoding vectors, ${\cal I}$, is given by
\begin{equation}
\mathcal{I} \triangleq GF(q)^N \setminus \bigcup_{k=1}^K \text{rowspace}(\mathbf{C}_k). \label{inno}
\end{equation}


\begin{mydef}
A linear network code is said to be innovative if for any channel realization $\Psi$, its encoded packet at time~$t$ is innovative to all users who have not successfully decoded the source packets yet, that is, those users with indices in $\{ k : T_k(\Psi) \geq t \}$.
\end{mydef}

\begin{myth}
Innovative linear network codes are uniformly optimal.
\end{myth}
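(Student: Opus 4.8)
The plan is to show that an innovative linear network code achieves the download delay $T_k(\Psi) = \min\{\tau : N_k(\tau,\Psi) = N\}$ for every user $k$ and every channel realization $\Psi$, which is exactly the condition in the definition of uniform optimality. Since Theorem~1 already establishes that $T_k(\Psi) > \tau$ whenever $N_k(\tau,\Psi) < N$, it suffices to prove the matching upper bound: as soon as user $k$ has accumulated $N$ non-erasure slots, decoding succeeds. So the whole argument reduces to tracking the rank of user $k$'s encoding matrix $\mathbf{C}_k$ over time.

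The key step is an invariant maintained slot by slot. I would argue by induction on $t$ that, for every user $k$ with $T_k(\Psi) \geq t$, the rank $r_k$ of $\mathbf{C}_k$ after slot $t$ equals $N_k(t,\Psi)$ — i.e., every non-erasure slot increments the dimension of the received subspace by exactly one. The base case $t = 0$ is trivial. For the inductive step, consider slot $t$ and the transmitted encoding vector $\mathbf{x}_t$. By the definition of an innovative code, $\mathbf{x}_t$ is innovative to every user who has not yet decoded, hence in particular to user $k$ if $T_k(\Psi) \geq t$; that means $\mathbf{x}_t \notin \mathrm{rowspace}(\mathbf{C}_k)$ as of the start of slot $t$. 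If slot $t$ is a non-erasure slot for user $k$ (i.e. $S_{k,t} = 1$), user $k$ receives $\mathbf{x}_t$ and, since the decoder knows the encoding vector, appends it as a new row, so the rank goes up by one; combined with the inductive hypothesis this gives the new rank $= N_k(t-1,\Psi) + 1 = N_k(t,\Psi)$. If slot $t$ is an erasure slot, the rank is unchanged and $N_k$ is also unchanged, so the invariant persists. One technical point worth spelling out: the definition of innovativeness refers to users with $T_k(\Psi) \geq t$, so I must check that the invariant is only claimed for such users and that the moment a user reaches rank $N$ it drops out of the ``not yet decoded'' set — which is consistent because once $r_k = N$, user $k$'s matrix $\mathbf{C}_k$ has full column rank and the linear system $\mathbf{C}_k (P_1,\dots,P_N)^{\mathsf T} = (\text{received symbols})$ has a unique solution, namely $(P_1,\dots,P_N)$, so $g_{k,t}$ can be taken to be the map returning that solution and~\eqref{decoding} holds.

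Putting the two bounds together: let $\tau^* \triangleq \min\{\tau : N_k(\tau,\Psi) = N\}$ (infinite if no such $\tau$, in which case $T_k(\Psi) = \infty$ as well by Theorem~1 applied at every finite $\tau$). By the invariant, after slot $\tau^*$ user $k$ has $r_k = N$ and can decode, so $T_k(\Psi) \leq \tau^*$; by Theorem~1, $T_k(\Psi) > \tau^* - 1$, i.e. $T_k(\Psi) \geq \tau^*$. Hence $T_k(\Psi) = \tau^*$, which is~\eqref{uoptimal}.

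I do not expect a genuine obstacle here — the argument is essentially bookkeeping. The one place requiring a little care is the induction's interaction with the clause ``users who have not successfully decoded yet'': I must make sure the invariant is stated so that it holds for user $k$ precisely while $T_k(\Psi) \geq t$, and verify that the first slot at which $r_k$ would reach $N$ is also the slot at which the decoder succeeds, so there is no off-by-one gap between ``rank becomes $N$'' and ``$T_k(\Psi)$''. A secondary subtlety is that the definition of a linear network code only constrains the encoding functions, not the decoding functions, so I should explicitly note that with $\mathrm{rank}(\mathbf{C}_k) = N$ a valid decoding function realizing~\eqref{decoding} exists, which is what makes $T_k(\Psi)$ actually attain $\tau^*$ rather than merely not exceed it by some fixed amount.
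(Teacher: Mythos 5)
Your proposal is correct and follows essentially the same route as the paper's proof: the paper simply observes that innovativeness forces every successfully received packet to be linearly independent of the previously received ones, so the rank reaches $N$ exactly at the $N$-th non-erasure slot, and your induction is a careful spelling-out of that same observation (with the lower-bound half of the equality supplied by Theorem~1, as the paper implicitly assumes).
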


\begin{proof}
With an innovative linear network code, by definition, the packets received by a user who has not successfully decoded all the source packets must all be linearly independent. Therefore, she is able to decode the source packets once she has experienced
$N$ non-erasure slots. In other words,~\eqref{uoptimal} holds for all users. Hence the code is uniformly optimal.
\end{proof}

In the next section, we will show that innovative linear network codes exist when $q \geq K$.

\section{The Innovative Encoding Vector Problem} \label{sec:secinno}

The existence of innovative linear network code is equivalent to the non-emptiness of the set of encoding vectors $\mathcal{I}$ as defined in~\eqref{inno}. It was shown in~\cite{DFT07} that ${\cal I}$ is non-empty if the finite field size, $q$, is larger than or equal to the number of users, $K$. We present a proof below for the sake of completeness. We begin with a simple lemma, which will be used again in a later section.

\begin{myle}
Let $\mathcal{A}_1, \mathcal{A}_2, \ldots, \mathcal{A}_K$ be finite subsets of a universal set $\mathcal{U}$.  If $K\geq 2$ and $\mathcal{A}_1, \mathcal{A}_2, \ldots, \mathcal{A}_K$ contain a common element, then
\[
|\mathcal{A}_1 \cup \mathcal{A}_2 \cup \cdots \cup \mathcal{A}_K |  < |\mathcal{A}_1| + |\mathcal{A}_2| + \cdots +|\mathcal{A}_K|.
\]
\label{special_case_of_Schwartz}
\end{myle}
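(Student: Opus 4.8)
The plan is to prove the inequality by a double-counting (indicator) argument rather than by inclusion–exclusion, since this avoids any sign bookkeeping. Let $x \in \mathcal{U}$ be an element common to all of $\mathcal{A}_1, \mathcal{A}_2, \ldots, \mathcal{A}_K$, whose existence is guaranteed by hypothesis.

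First I would count incidences between elements and sets. For each $u \in \mathcal{U}$, let $d(u) \triangleq |\{ i : u \in \mathcal{A}_i \}|$ be the number of subsets containing $u$. Summing the cardinalities over $i$ and interchanging the order of summation gives
\[
\sum_{i=1}^K |\mathcal{A}_i| = \sum_{u \in \mathcal{A}_1 \cup \cdots \cup \mathcal{A}_K} d(u).
\]
Every $u$ appearing in the union satisfies $d(u) \geq 1$, and by the choice of $x$ we have $d(x) \geq K$.

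Next I would bound the right-hand side from below by isolating the contribution of $x$:
\[
\sum_{u \in \mathcal{A}_1 \cup \cdots \cup \mathcal{A}_K} d(u) = d(x) + \sum_{u \neq x} d(u) \geq K + \bigl( |\mathcal{A}_1 \cup \cdots \cup \mathcal{A}_K| - 1 \bigr).
\]
Since $K \geq 2$, the right-hand side is at least $|\mathcal{A}_1 \cup \cdots \cup \mathcal{A}_K| + 1 > |\mathcal{A}_1 \cup \cdots \cup \mathcal{A}_K|$, which together with the displayed identity yields the claim.

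There is no genuine obstacle here; the only point requiring care is to ensure that on the left of the last display $x$ is stripped off exactly once, so that the remaining $|\mathcal{A}_1 \cup \cdots \cup \mathcal{A}_K| - 1$ summands range over elements distinct from $x$, and that the strict slack of $K - 1 \geq 1$ originates purely from $x$ being shared. An equivalent one-line version would be to invoke the union bound $|\bigcup_{i} \mathcal{A}_i| \leq \sum_{i} |\mathcal{A}_i|$, note that it holds with equality if and only if the $\mathcal{A}_i$ are pairwise disjoint, and observe that $x \in \mathcal{A}_1 \cap \mathcal{A}_2$ rules this out; I would keep whichever phrasing is shorter in context.
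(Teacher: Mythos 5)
Your proof is correct, and it rests on the same elementary counting idea as the paper's: the common element $x$ contributes $K$ to the sum of cardinalities but only $1$ to the cardinality of the union, and $K\geq 2$ supplies the strict slack. The paper phrases this by deleting $x$ from each $\mathcal{A}_i$ and applying the union bound to the residual sets $\mathcal{A}_i\setminus\{x\}$, while you count incidences $d(u)$ directly; the two presentations are interchangeable here, and your closing one-line variant (union bound is tight only for pairwise disjoint sets) is also a valid shortcut.
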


\begin{proof}
Suppose $x \in \mathcal{A}_i$ for all $i$. Let $\mathcal{A}^*_i$ be the set $\mathcal{A}_i \setminus \{x\}$  for $i=1,2,\ldots,K$. By applying the union bound,
\[
|\mathcal{A}^*_1 \cup \mathcal{A}^*_2 \cup \cdots \cup \mathcal{A}^*_K |  \leq |\mathcal{A}^*_1| + |\mathcal{A}^*_2| + \cdots +|\mathcal{A}^*_K|.
\]
This implies that
\[
|\mathcal{A}_1 \cup \mathcal{A}_2 \cup \cdots \cup \mathcal{A}_K | -1  \leq |\mathcal{A}_1| -1 + |\mathcal{A}_2| -1 + \cdots +|\mathcal{A}_K| -1.
\]
As $K\geq 2$ by hypothesis, we obtain the inequality in the lemma.
\end{proof}

\begin{myth}[\cite{DFT07}] \label{nonemptyI}
If $q\geq K$ and the rank of $\mathbf{C}_k$ is strictly less than $N$ for all $k$'s, then $\mathcal{I}$ is non-empty.
\end{myth}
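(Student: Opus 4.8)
The plan is to prove this by a straightforward counting argument, pitting the size of the ambient space $GF(q)^N$ against the combined size of the forbidden subspaces $\text{rowspace}(\mathbf{C}_k)$, $k=1,2,\ldots,K$. First I would dispose of the trivial case $K=1$: since $\text{rank}(\mathbf{C}_1)\le N-1$ by hypothesis, we have $|\text{rowspace}(\mathbf{C}_1)| = q^{\text{rank}(\mathbf{C}_1)} \le q^{N-1} < q^N = |GF(q)^N|$, so $\mathcal{I}$ is non-empty.

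For $K\ge 2$, the key observation is that the zero vector $\mathbf{0}$ lies in every row space $\text{rowspace}(\mathbf{C}_k)$, so these $K$ sets share a common element. Applying Lemma~\ref{special_case_of_Schwartz} with $\mathcal{A}_k = \text{rowspace}(\mathbf{C}_k)$ and $\mathcal{U} = GF(q)^N$ yields the \emph{strict} bound
\[
\Big| \bigcup_{k=1}^K \text{rowspace}(\mathbf{C}_k) \Big| < \sum_{k=1}^K |\text{rowspace}(\mathbf{C}_k)| = \sum_{k=1}^K q^{\text{rank}(\mathbf{C}_k)} \le K q^{N-1},
\]
where the last inequality uses $\text{rank}(\mathbf{C}_k)\le N-1$ for all $k$. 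Since $q\ge K$ by hypothesis, we have $Kq^{N-1} \le q\cdot q^{N-1} = q^N = |GF(q)^N|$, so $\bigcup_{k=1}^K \text{rowspace}(\mathbf{C}_k)$ is a proper subset of $GF(q)^N$, and hence $\mathcal{I}$ is non-empty.

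The argument is essentially routine; the only subtlety is that one genuinely needs the \emph{strict} inequality, because the naive union bound combined with $q\ge K$ would give only $\le q^N$, which is not enough. This is exactly what Lemma~\ref{special_case_of_Schwartz} provides: the shared zero vector forces the union to be strictly smaller than the sum of the cardinalities. (One could also bypass the lemma by working with the punctured sets $\text{rowspace}(\mathbf{C}_k)\setminus\{\mathbf{0}\}$ of size at most $q^{N-1}-1$ and bounding $|\bigcup_k \text{rowspace}(\mathbf{C}_k)| \le 1 + K(q^{N-1}-1) = Kq^{N-1}-(K-1) < q^N$ for $q\ge K$ and $K\ge 2$, but invoking the already-established lemma is cleaner.)
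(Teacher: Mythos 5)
Your proposal is correct and follows essentially the same argument as the paper: both invoke Lemma~\ref{special_case_of_Schwartz} on the row spaces (which share the zero vector) to get the strict bound, then use $Kq^{N-1}\le q^N$. Your explicit handling of the $K=1$ case is a minor extra care the paper omits, but the route is identical.
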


\begin{proof}
For $k=1,2,\ldots, K$, let $V_k$ be the row space of $\mathbf{C}_k$. The subspace $V_k$ consists of the $q^{r_k}$ encoding vectors which are {\em not} innovative to user $k$. Obviously the zero vector is a common vector of these $K$ subspaces. By Lemma~\ref{special_case_of_Schwartz}, the union of these $K$ subspaces contains strictly less than $\sum_{k=1}^K q^{r_k}$ vectors. Since $K\leq q$, we have $\sum_{k=1}^K q^{r_k} \leq K q^{N-1} \leq  q^N$. Therefore there exists at least one encoding vector which is innovative to all users.
\end{proof}

The condition $q \geq K$ in Theorem~\ref{nonemptyI} cannot be improved in general. In Appendix~\ref{app:A} we construct examples with no innovative encoding vector for $K = q+1$.

\smallskip

The set of innovative encoding vectors, ${\cal I}$, can be characterized by the orthogonal complements of the row spaces of $\mathbf{C}_k$'s, which is also known as the null spaces of $\mathbf{C}_k$'s. For $k=1, 2, \ldots, K$, let $V_k$ be the row space of $\mathbf{C}_k$. Denote the {\em orthogonal complement} of $V_k$ by $V_k^\perp$,
\[
V_k^\perp \triangleq \{ \mathbf{v} \in GF(q)^N:\, \mathbf{x} \cdot \mathbf{v} = 0 \text{ for all } \mathbf{x}\in V_k\},
\]
where $\mathbf{x} \cdot \mathbf{v}$ is the inner product of $\mathbf{x}$ and $\mathbf{v}$.
We will use the fact from linear algebra that a vector $\mathbf{x}$ is in $V_k$ if and only if  $\mathbf{x}\cdot\mathbf{v}=0$ for all $\mathbf{v} \in V_k^\perp$. Let $\mathbf{B}_k$ be an $(N - r_k) \times N$ matrix whose rows form a basis of $V_k^\bot$. To see whether a vector $\mathbf{x}$ is in $V_k$, it amounts to checking the condition $\mathbf{B}_k \mathbf{x} = \bzero$; if $\mathbf{B}_k \mathbf{x} = \bzero$, then $\mathbf{x}\in V_k$, and vice versa.

There are many different choices for the basis of the orthogonal complement $V_k^\perp$. We can obtain one such choice via the reduced row-echelon form (RREF) of $\mathbf{C}_k$. Suppose we have obtained the RREF of $\mathbf{C}_k$ by elementary row operations. By appropriately permutating the columns of $\mathbf{C}_k$, we can write $\mathbf{C}_k$ in the following form:
\begin{equation}
[ \mathbf{I}_{r_k} | \mathbf{A}_k] \mathbf{P}_k, \label{eq:RREF}
\end{equation}
where $\mathbf{I}_{r_k}$ is the $r_k \times r_k$ identity matrix, $\mathbf{A}_k$ is an $r_k \times (N-r_k)$ matrix over $GF(q)$, and $\mathbf{P}_k$ is an $N\times N$ permutation matrix\footnote{Recall that a permutation matrix is a square zero-one matrix so that each column and each row contain exactly one ``1''.}.  We can take
\begin{equation} \mathbf{B}_k = [-\mathbf{A}_k^T | \mathbf{I}_{N-r_k}]\mathbf{P}_k.
\label{eq:RREF2}
\end{equation}
The superscript $^T$ represents the transpose operator.
It is straightforward to verify that the product of the matrix in~\eqref{eq:RREF} and $\mathbf{B}_k^T$ is a zero matrix. Hence, the $n-r_k$ row vectors in $\mathbf{B}_k$ belong to the orthogonal complement $V_k^\perp$. Since $\mathbf{B}_k$ contains a permutation of $\mathbf{I}_{N-r_k}$ as one of its submatrices, the rows of $\mathbf{B}_k$ are linearly independent. As $\dim(V_k^\perp) = n-r_k$, we conclude that the rows of $\mathbf{B}_k$ form a basis of $V_k^\perp$.

In Appendix~\ref{app:B}, we give another way of computing a basis of $V_k^{\perp}$, which is suitable for incremental processing.


The following simple result characterizes the set of innovative encoding vectors, ${\cal I}$:

\begin{myle} \label{th:characterization}
Given $\mathbf{C}_1, \mathbf{C}_2, \ldots, \mathbf{C}_K$, an encoding vector $\mathbf{x}$ belongs to ${\cal I}$ if and only if $\mathbf{B}_k \mathbf{x} \neq \bzero$ for all $k$'s.
\end{myle}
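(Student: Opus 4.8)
The plan is to unwind both directions of the equivalence using the definition of $\mathcal{I}$ in~\eqref{inno} and the characterization of $V_k$ in terms of its orthogonal complement. Recall that $\mathbf{x} \in \mathcal{I}$ means $\mathbf{x} \notin \text{rowspace}(\mathbf{C}_k) = V_k$ for every $k$. So it suffices to show, for each fixed $k$, that $\mathbf{x} \notin V_k$ if and only if $\mathbf{B}_k \mathbf{x} \neq \bzero$; taking the conjunction over all $k$ then yields the claim.

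For the single-$k$ statement I would argue as follows. By construction, the rows of $\mathbf{B}_k$ form a basis of $V_k^\perp$. The standard fact from linear algebra recalled just before the lemma states that $\mathbf{x} \in V_k$ if and only if $\mathbf{x} \cdot \mathbf{v} = 0$ for every $\mathbf{v} \in V_k^\perp$. Since a basis spans $V_k^\perp$, the condition ``$\mathbf{x} \cdot \mathbf{v} = 0$ for all $\mathbf{v} \in V_k^\perp$'' is equivalent to ``$\mathbf{x} \cdot \mathbf{b} = 0$ for every row $\mathbf{b}$ of $\mathbf{B}_k$,'' which is precisely the statement $\mathbf{B}_k \mathbf{x} = \bzero$. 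Negating, $\mathbf{x} \notin V_k$ if and only if $\mathbf{B}_k \mathbf{x} \neq \bzero$. This is really the whole content; I would state it cleanly and then assemble the pieces.

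To finish: $\mathbf{x} \in \mathcal{I}$ iff $\mathbf{x} \notin V_k$ for all $k$, iff $\mathbf{B}_k \mathbf{x} \neq \bzero$ for all $k$. I would note in passing that the reduction from ``test membership against all of $V_k^\perp$'' to ``test against a finite spanning set'' uses bilinearity of the inner product, and that this is exactly why having an explicit basis matrix $\mathbf{B}_k$ (for instance the one in~\eqref{eq:RREF2}) is useful.

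There is no real obstacle here; the lemma is essentially a restatement of the duality between a subspace and its orthogonal complement. The only point requiring a word of care is that over $GF(q)$ one cannot invoke any inner-product-space intuition about $V_k$ and $V_k^\perp$ being complementary (they need not be, since the form is not positive definite) — but the proof does not need that, only the one-directional fact ``$\mathbf{x} \in V_k \iff \mathbf{x} \perp V_k^\perp$,'' which holds over any field and is stated as given in the text. So the proof will be short.
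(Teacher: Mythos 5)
Your proposal is correct and follows essentially the same route as the paper: both reduce the claim to the per-user equivalence $\mathbf{x}\in V_k \iff \mathbf{B}_k\mathbf{x}=\bzero$, which rests on the fact (stated just before the lemma) that membership in $V_k$ is equivalent to orthogonality to all of $V_k^\perp$, tested against the basis rows of $\mathbf{B}_k$. Your added remarks on bilinearity and on not needing $V_k\oplus V_k^\perp = GF(q)^N$ are correct but not part of the paper's (terser) argument.
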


\begin{proof}
If $\mathbf{B}_k \mathbf{x} \neq \bzero$, then $\mathbf{x}$ is not in $V_k$ and therefore, is innovative to user~$k$. It is innovative if $\mathbf{B}_k \mathbf{x} \neq \bzero$ for all $k$'s.

Conversely, if $\mathbf{B}_k \mathbf{x} = \bzero$ for some $k$, then $\mathbf{x}$ is in $V_k$, and hence is not innovative to user~$k$. Therefore, $\mathbf{x} \not \in {\cal I}$.
\end{proof}

When the underlying finite field size is small, innovative encoding vectors may not exist. For further investigation of the existence of innovative encoding vectors, we formulate the following decision problem:

{\bf Problem:} \IEVq

{\bf Instance:} $K$ matrices, $\mathbf{C}_1, \mathbf{C}_2,
 \ldots, \mathbf{C}_K$, over $GF(q)$, each of which has $N$ columns.

{\bf Question:} Is there an $N$-dimensional vector $\mathbf{x}$ over $GF(q)$ which does not belong to the row space of $\mathbf{C}_k$ for $k=1,2,\ldots, K$?

We can assume without loss of generality that all the matrices $\mathbf{C}_1$ to $\mathbf{C}_K$ in \IEVq\ are not full-rank. Also, we know from Theorem~\ref{nonemptyI} that the answer to \IEVq\ is always {\sc Yes} if $q \geq K$. The following result shows that the problem is NP-complete for $q=2$.

\begin{myth} \label{th:IEV}
\IEV\ is NP-complete.
\end{myth}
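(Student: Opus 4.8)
The plan is to prove $\mathsf{IEV}_2$ is NP-complete by first observing membership in NP, then giving a Karp-reduction from a known NP-complete problem. Membership is immediate: given an instance $\mathbf{C}_1,\ldots,\mathbf{C}_K$ and a candidate witness $\mathbf{x}\in GF(2)^N$, we can check in polynomial time that $\mathbf{x}$ is not in the row space of any $\mathbf{C}_k$ — equivalently, using Lemma~\ref{th:characterization}, that $\mathbf{B}_k\mathbf{x}\neq\bzero$ for all $k$, where each $\mathbf{B}_k$ is computable in polynomial time from $\mathbf{C}_k$ via \eqref{eq:RREF2}. So the crux is hardness.

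For the reduction, I would reduce from a problem whose structure already ``looks like'' a disjunction of linear non-membership constraints. The cleanest candidate is \textsc{Not-All-Equal 3SAT} (NAE-3SAT), or alternatively \textsc{1-in-3 SAT} or \textsc{Monotone 3SAT}; the key point is that over $GF(2)$, the condition $\mathbf{B}_k\mathbf{x}\neq\bzero$ is a disjunction of affine equations $\bigvee_j (\mathbf{b}_{k,j}^T\mathbf{x}=1)$, which is naturally a ``clause'' over linear forms. Given a formula, I would introduce one coordinate of $\mathbf{x}$ per variable (plus possibly auxiliary coordinates, e.g. a fixed coordinate pinned to $1$ to realize affine rather than purely linear forms — this can be enforced by adding a user $\mathbf{C}_k$ whose row space is the hyperplane $\{\mathbf x : x_0 = 0\}$... wait, non-membership in a hyperplane is the wrong direction, so instead one pins $x_0=1$ by forcing it another way, e.g. duplicating constraints, which I would work out carefully). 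Each clause of the formula becomes one matrix $\mathbf{C}_k$: I choose $\mathbf{C}_k$ so that its orthogonal complement $V_k^\perp$ has as a basis exactly the linear forms appearing (negated appropriately) in that clause, so that ``$\mathbf{x}$ satisfies clause $k$'' is equivalent to ``$\mathbf{B}_k\mathbf{x}\neq\bzero$'' is equivalent to ``$\mathbf{x}\notin\text{rowspace}(\mathbf{C}_k)$.'' Since $\mathbf{B}_k$ can be prescribed and then $\mathbf{C}_k$ recovered as (a basis of) the orthogonal complement of the row space of $\mathbf{B}_k$, this construction is polynomial-time. An innovative vector for the constructed instance then exists if and only if the formula is satisfiable.

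The main obstacle, I expect, is the mismatch between ``does not belong to a single subspace $V_k$'' (which is a negation of a \emph{conjunction} of linear equations — generically a very weak constraint, satisfied by almost every vector when $r_k<N$) and a Boolean clause (which we want to be a disjunction of \emph{literal-level} constraints). Ruling out membership in $V_k$ forces $\mathbf{B}_k\mathbf{x}\neq\bzero$, i.e. \emph{some} row of $\mathbf{B}_k$ has odd inner product with $\mathbf{x}$; to make this encode ``the literals of clause $k$ are not all false'' in a truth-preserving way I must control $\mathbf{B}_k$ precisely, and I must ensure no spurious global constraints are introduced (e.g. the coordinates behave as genuine independent Boolean variables and the ``pin to $1$'' gadget does not itself kill all solutions). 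A secondary technical point is handling the $GF(2)$ inner product / self-orthogonality subtleties when passing between $\mathbf{C}_k$ and $\mathbf{B}_k=[-\mathbf{A}_k^T\,|\,\mathbf{I}]\mathbf{P}_k$, since over $GF(2)$ a subspace and its complement can intersect nontrivially; I would keep all reasoning on the $\mathbf{B}_k$ side (prescribe $\mathbf{B}_k$, then define $\mathbf{C}_k$ via \eqref{eq:RREF}/\eqref{eq:RREF2} read backwards) to sidestep this. Once the gadget for a single clause is verified, correctness of the whole reduction follows by taking the conjunction over clauses, and the claim follows.
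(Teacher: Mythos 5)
Your approach is essentially the paper's: prescribe each $\mathbf{B}_k$ so that its rows are the (affine) indicator forms of the literals of clause $k$, recover $\mathbf{C}_k$ as a basis of the orthogonal complement of the row space of $\mathbf{B}_k$, and use Lemma~\ref{th:characterization} to identify ``clause $k$ is satisfied'' with ``$\mathbf{x}\notin\text{rowspace}(\mathbf{C}_k)$.'' Two corrections are needed, though. First, you talked yourself out of the right gadget: adding a user whose row space is the hyperplane $\{\mathbf{x}: x_{n+1}=0\}$ (i.e., whose $\mathbf{B}$ is the single row $[\mathbf{0}_n\ 1]$) is exactly the right direction, because \emph{non}-membership in that hyperplane is the condition $x_{n+1}\neq 0$, which over $GF(2)$ pins $x_{n+1}=1$. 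This is precisely the paper's extra user $m+1$; no ``duplicating constraints'' workaround is required, and without some such pin the converse direction of your reduction fails (an innovative vector with last coordinate $0$ need not yield a satisfying assignment). Second, there is no mismatch forcing you to NAE-3SAT or Monotone 3SAT: plain 3-SAT reduces directly. Encoding the literal $x_k$ by the row $\mathbf{e}_k$ and $\neg x_k$ by $\mathbf{e}_k+\mathbf{e}_{n+1}$, the inner product of the $j$-th row of $\mathbf{B}_i$ with $(\mathbf{x},1)$ equals $1$ exactly when the $j$-th literal of clause $i$ is true, so $\mathbf{B}_i(\mathbf{x},1)\neq\bzero$ is exactly ``clause $i$ has a true literal.'' Your instinct to keep all reasoning on the $\mathbf{B}_k$ side (sidestepping $GF(2)$ self-orthogonality issues) is also what the paper does. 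With these two points settled, your sketch becomes the paper's proof.
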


\begin{proof}
The idea is to Karp-reduce the 3-$\mathsf{SAT}$ problem, well-known to be NP-complete~\cite{GareyJohnson}, to the \IEV\ problem. Recall that the 3-$\mathsf{SAT}$ problem is a Boolean satisfiability problem, whose instance is a Boolean expression written in conjunctive normal form with three variables per clause (3-$\mathsf{CNF}$), and the question is to decide if there is some assignment of {\sc True} and {\sc False} vaules to the variables such that the given Boolean expression has a {\sc True} value.

Let $E$ be a given Boolean expression with $n$ variables $x_1,\ldots, x_n$, and $m$ clauses in 3-$\mathsf{CNF}$. We want to construct a Karp-reduction from the 3-$\mathsf{SAT}$ problem to the \IEV\ problem with $N=n+1$ packets and $K=m+1$ users.

For the $i$-th clause ($i=1, 2, \ldots, m$), we first construct a $3\times (n+1)$ matrix $\mathbf{B}_i$. If the $j$-th literal ($j=1,2,3$) in the $i$-th clause is $x_k$, then let the $k$-th component in the $j$-th row of $\mathbf{B}_i$ be one, and the other components be all zero. Otherwise, if the $j$-th literal in the $i$-th clause is $\neg x_k$, then let the $k$-th and the $(n+1)$-st component in the $j$-th row of $\mathbf{B}_i$ be both one, and the remaining components be all zero. Let $\mathbf{C}_i$ be a matrix whose rows form a basis of the orthogonal complement of the row space of $\mathbf{B}_i$.
We will use the fact that a vector $\mathbf{v}$ is in the row space of $\mathbf{C}_i$ if and only if $\mathbf{B}_i \mathbf{v} = 0$.

Consider an example with $n=4$ Boolean variables. From the clause $\neg x_1 \vee \neg x_2 \vee  x_3$, we get
\[\small \mathbf{B}_i =
\begin{bmatrix}
1& 0& 0& 0& 1 \\
0& 1& 0& 0& 1 \\
0& 0& 1& 0& 0
\end{bmatrix}, \
 \mathbf{C}_i = \begin{bmatrix}
 0 & 0 & 0 & 1 & 0 \\
 1 & 1 & 0 & 0 & 1
 \end{bmatrix}.
\]
It can be verified that each row in $\mathbf{B}_i$ is orthogonal to the rows in $\mathbf{C}_i$, i.e.,
the row space of $\mathbf{C}_i$ is the orthogonal complement of the row space of $\mathbf{B}_i$.

For the extra user, user $m+1$, let $\mathbf{B}_{m+1}$ be the $1\times (n+1)$ matrix $[\mathbf{0}_n \; 1]$, where $\mathbf{0}_n$ stands for the $1\times n$  all-zero vector. The problem reduction can be done in polynomial time.

Let $\mathbf{x} = (x_1, x_2, \ldots, x_n)$ be a Boolean vector and define $\hat{\mathbf{x}} \triangleq (\mathbf{x}, 1)$. Note that any solution $\mathbf{x}$ to a given 3-$\mathsf{SAT}$ problem instance would cause the product $\mathbf{B}_j \hat {\mathbf{x}}$ a non-zero vector for $j=1, 2, \ldots, m+1$. By Lemma~\ref{th:characterization}, $\hat{\mathbf{x}}$ is not in the row space of $\mathbf{C}_j$ for all $j$. Hence $\hat{\mathbf{x}}$ is also a solution to the derived \IEV\  problem.


Conversely, any solution to the derived \IEV\ problem also yields a solution to the original 3-$\mathsf{SAT}$  problem as well. Let $\mathbf{c} = (c_1, c_2, \ldots, c_n, c_{n+1}) \in GF(2)^{n+1}$ be a solution to the derived \IEV\ problem. Note that we must have $c_{n+1} = 1$ because of $\mathbf{B}_{m+1}$. Let  $i$ be an index between 1 and~$m$. Since $\mathbf{c}$ is not in the row space of $\mathbf{C}_i$, the product $\mathbf{B}_i \mathbf{c}$ is a non-zero vector. Hence, if we assign {\sc True} to $x_k$ if $c_k=1$ and {\sc False} to $x_k$ if $c_k=0$, for $k=1,2,\ldots,n$, then the $i$-th clause will have a {\sc True} value. Since this is true for all $i$, the whole Boolean expression also has a {\sc True} value.

The problem \IEV\ is clearly in NP, since it is efficiently verifiable. Hence it is NP-complete.
\end{proof}

The above proof can be extended to the following more general result:

\begin{myth} For any prime power $q$, the problem
\IEVq\ is NP-complete.
\end{myth}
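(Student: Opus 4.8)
The plan is to extend the Karp-reduction in the proof of Theorem~\ref{th:IEV}: for any \emph{fixed} prime power $q$, I would Karp-reduce 3-$\mathsf{SAT}$ to \IEVq. That \IEVq\ is in NP is immediate, since given a candidate vector $\mathbf{x}$ one checks in polynomial time, e.g.\ by Gaussian elimination, that $\mathbf{x}$ lies outside $\text{rowspace}(\mathbf{C}_k)$ for every $k$. All the effort is in the hardness direction, and the single point at which the $q=2$ argument breaks down is that over $GF(q)$ an innovative vector need not be a $\{0,1\}$-vector, so the clause gadgets of Theorem~\ref{th:IEV} no longer encode Boolean clauses faithfully. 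The new ingredient is a small family of ``pinning'' users that forces the variable coordinates back into $\{0,1\}$.

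Concretely, from a 3-$\mathsf{CNF}$ formula $E$ with $n$ variables and $m$ clauses I would build an \IEVq\ instance with $N=n+1$ columns. For each clause $i$, form the $3\times(n+1)$ matrix $\mathbf{B}_i$ exactly as in the proof of Theorem~\ref{th:IEV}, except that a negative literal $\neg x_k$ contributes the row $e_k^{T}-e_{n+1}^{T}$ (over $GF(2)$ this equals the paper's $e_k^{T}+e_{n+1}^{T}$); keep the user $\mathbf{B}_{m+1}=[\mathbf{0}_n\;1]$, which forbids the hyperplane $c_{n+1}=0$; and, for each $k\in\{1,\dots,n\}$ and each $v\in GF(q)\setminus\{0,1\}$, add a ``pinning'' user $\mathbf{B}_{(k,v)}=[e_k^{T}-v\,e_{n+1}^{T}]$, which forbids the hyperplane $c_k=v\,c_{n+1}$. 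As in Theorem~\ref{th:IEV}, every user's matrix in the \IEVq\ instance is taken to be a basis of the orthogonal complement of the row space of the corresponding $\mathbf{B}$. Since $q$ is a constant, only $n(q-2)$ pinning users are added, so the whole reduction, including the orthogonal-complement computations, runs in polynomial time.

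For correctness I would argue as follows. If $\mathbf{x}$ is innovative, then $c_{n+1}\neq 0$ because of $\mathbf{B}_{m+1}$; row spaces being closed under scalar multiplication, $c_{n+1}^{-1}\mathbf{x}$ is also innovative, so I may assume $c_{n+1}=1$. The pinning users then force $c_k\neq v$ for every $v\neq 0,1$, hence $c_k\in\{0,1\}$ for all $k\le n$. By Lemma~\ref{th:characterization}, $\mathbf{B}_i\mathbf{x}\neq\mathbf{0}$ for every clause $i$, and since the coordinates are now Boolean the computation is word-for-word that of the $q=2$ case: the assignment setting $x_k$ true iff $c_k=1$ satisfies clause $i$. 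Conversely, a satisfying assignment yields the $\{0,1\}$-vector $\mathbf{c}=(c_1,\dots,c_n,1)$, which avoids the hyperplane $c_{n+1}=0$, every pinning hyperplane (as $c_k\in\{0,1\}$), and every clause hyperplane (as each clause is satisfied); hence $\mathbf{c}$ is innovative. The case $q=2$ is Theorem~\ref{th:IEV} itself, recovered here because then $GF(q)\setminus\{0,1\}=\emptyset$ and no pinning users appear.

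The main obstacle, and really the only new idea, is the pinning gadget: over a larger field one cannot simply read a Boolean assignment off an innovative vector, so one needs the normalization trick (rescaling to make $c_{n+1}=1$) together with the codimension-one pinning subspaces to eliminate the non-Boolean values one coordinate and one value at a time. Everything else I expect to be routine: confirming that $\mathbf{B}_{(k,v)}\mathbf{x}=\mathbf{0}$ is equivalent to $\mathbf{x}\in\text{rowspace}(\mathbf{C}_{(k,v)})$ (which is Lemma~\ref{th:characterization}) and that all constructed matrices are not full rank, as assumed throughout.
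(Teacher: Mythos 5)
Your proposal is correct and takes essentially the same route as the paper: the paper likewise adds $n(q-2)$ single-row ``pinning'' users (its matrices $\mathbf{E}_{u,j}$, which are scalar multiples of your $\mathbf{B}_{(k,v)}$ after the substitution $v\mapsto v^{-1}$), places $-1$ in the last coordinate of the negative-literal rows, and normalizes $c_{n+1}=1$ by rescaling the innovative vector before reading off the Boolean assignment. No gaps.
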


\begin{proof}
The reduction from 3-$\mathsf{SAT}$ to \IEVq\ is the same as before, except with the following changes:
\begin{enumerate}
\item In the derived \IEVq\ problem, $K = m + 1 + n(q-2)$ users. In other words, there are $n(q-2)$ more users than that in the previous proof.
\item For $i=1, 2, \ldots, m$, $\mathbf{B}_i$ is defined the same as before except that when the $j$-th literal in the $i$-th clause is $\neg x_k$, let the $(n+1)$-th component be $-1$ (rather than 1).
\item For $u = 1, 2, \ldots, n$, let $\mathbf{E}_{u,1}, \mathbf{E}_{u,2}, \ldots, \mathbf{E}_{u,q-2}$ be $1 \times (n+1)$ matrices whose $u$-th components are distinct elements in $\text{GF}(q)\setminus\{0,1\}$ and the ($n$+1)-st components are all equal to $-1$. Let these $n(q-2)$ matrices be $\mathbf{B}_i$'s for $i = m+2, m+3, \ldots, m+1+n(q-2)$.
\end{enumerate}
The forward part is the same as before, so we only need to consider the converse part. Let $\mathbf{c} = (c_1, c_2, \ldots, c_n, c_{n+1}) \in GF(q)^{n+1}$ be a solution to the derived \IEVq\ problem. Same as before, we must have $c_{n+1} \neq 0$ because of $\mathbf{B}_{m+1}$. Since a non-zero scalar multiplication of $\mathbf{c}$ remains to be a solution to the derived \IEVq\ problem, without loss of generality, we can assume that $c_{n+1} = 1$. Due to the extra $n(q-2)$ users, for $i=1,2, \ldots, n$, we must have $c_i = 0$ or 1, for otherwise $\mathbf{E}_{i,j} \mathbf{c}$ must be zero for some $j$. (More precisely, one and only one of these $q-2$ vectors is zero.) The rest of the proof then follows the same argument as in Theorem~\ref{th:IEV}.
\end{proof}

Note that in the formulation of \IEVq\, the values of $N$ and $K$ are arbitrary. The above result shows that it is NP-complete. With the restriction of $K \leq q$, \IEVq\ becomes trivial to solve, as shown by Theorem~\ref{nonemptyI}.


Apart from the problem of existence of an innovative vector, it is also of interest in finding an $N$-dimensional encoding vector that is innovative to as many users as possible. We state the optimization problem as follows:

{\bf Problem:} \MAXIEVq

{\bf Instance:} $K$ matrices $\mathbf{C}_k$ over $GF(q)$, $k=1,2,\ldots, K$, and each matrix has $N$ columns.

{\bf Objective:} Find an $N$-dimensional vector $\mathbf{x}$ over $GF(q)$ such that the number of users to whom $\mathbf{x}$ is innovative is maximized.


The following result shows the hardness of finding approximate solution to \MAXIEVq:

\begin{myth}
For $2 \leq q < K$, there is no approximation algorithm for \MAXIEVq\ with an approximation guarantee of $1 - \epsilon_M$, assuming $\text{P} \neq \text{NP}$, where $\epsilon_M$ is a positive constant.
\end{myth}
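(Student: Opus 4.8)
The plan is to turn the reduction used in the proof that \IEVq\ is NP-complete into a \emph{gap-preserving} reduction from a hard-gap version of $\mathsf{MAX}$-3-$\mathsf{SAT}$. We start from the following standard consequence of the PCP theorem (together with the Papadimitriou--Yannakakis bounded-occurrence variant): there are universal constants $\epsilon_0\in(0,1)$ and $B$ such that it is NP-hard to distinguish a satisfiable 3-$\mathsf{CNF}$ formula $E$, in which every variable occurs in at most $B$ clauses, from one in which every truth assignment leaves at least $\epsilon_0 m$ of the $m$ clauses unsatisfied; bounded occurrence forces $m=\Theta(n)$. Given such an $E$ with $n$ variables and $m$ clauses, set $N=n+1$ and build, exactly as in the \IEVq\ proof, the clause matrices $\mathbf{C}_1,\dots,\mathbf{C}_m$, the consistency matrices $\mathbf{C}_{u,j}$ (orthogonal complements of the $\mathbf{E}_{u,j}$, for $u=1,\dots,n$ and $j=1,\dots,q-2$), and $\mathbf{C}_{m+1}$ (orthogonal complement of $[\,\mathbf{0}_n\ 1\,]$). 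The only new ingredient is \emph{amplification}: include $M:=m$ identical copies of the user $\mathbf{C}_{m+1}$ and $L:=B+1$ identical copies of each $\mathbf{C}_{u,j}$. The number of users is $K=M+m+n(q-2)L=\Theta(m)$, the construction is polynomial, and $K$ is far larger than the fixed $q$, so the instance lies in the regime $2\le q<K$.

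Next, establish the two halves of the gap. \textbf{Completeness:} if $E$ is satisfiable, then the vector $(\mathbf{x},1)$ built from a satisfying assignment $\mathbf{x}\in\{0,1\}^n$ is innovative to \emph{all} $K$ users; this is precisely the forward direction of the \IEVq\ reduction, verified on each of the three user types and trivially unaffected by taking copies, so $\mathrm{OPT}=K$. \textbf{Soundness:} if $E$ is a NO-instance, then every $\mathbf{c}\in GF(q)^{N}$ is innovative to at most $(1-\delta)K$ users, where $\delta:=\epsilon_0 m/K=\Omega(1)$. Indeed, if $c_{n+1}=0$ all $M$ copies of $\mathbf{C}_{m+1}$ fail, so at most $K-M\le K-\epsilon_0 m$ users are hit. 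Otherwise, scale so $c_{n+1}=1$ and let $S=\{u:c_u\notin\{0,1\}\}$. For each $u\in S$ exactly one of the $q-2$ conditions $\mathbf{E}_{u,j}\mathbf{c}=0$ holds (inversion being a bijection of $GF(q)\setminus\{0,1\}$), so $|S|L$ consistency-users fail; and the $\{0,1\}$-assignment read off the coordinates outside $S$, extended arbitrarily on $S$, shows the number of innovative clause-users is at most $(\text{clauses meeting }S)+(1-\epsilon_0)m\le B|S|+(1-\epsilon_0)m$. Summing the three contributions gives at most $K-\epsilon_0 m-|S|(L-B)\le K-\epsilon_0 m=(1-\delta)K$, so $\mathrm{OPT}\le(1-\delta)K$.

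Finally, set $\epsilon_M:=\delta/2$, a positive constant depending only on $q$, $B$, $\epsilon_0$. An algorithm with approximation guarantee $1-\epsilon_M$, run on the reduced instance, outputs a vector of value at least $(1-\epsilon_M)\mathrm{OPT}$, which by completeness exceeds $(1-\delta)K$ on YES-instances and by soundness is at most $(1-\delta)K$ on NO-instances; hence it would decide 3-$\mathsf{SAT}$ in polynomial time, contradicting $\text{P}\neq\text{NP}$.

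The crux — and the reason the bare \IEVq\ reduction does not already suffice — is soundness: a candidate vector need not obey the intended $\{0,1\}$-encoding. Over $GF(q)$ with $q\ge3$, setting a single coordinate to a ``third'' field value makes \emph{every} clause containing that variable innovative at a cost of only one consistency-user, and setting $c_{n+1}=0$ makes essentially all clause- and consistency-users innovative simultaneously. The copy-amplification (the $M$ copies of $\mathbf{C}_{m+1}$ and the $L>B$ copies of each $\mathbf{C}_{u,j}$), together with the bounded-occurrence hypothesis — which guarantees that each ``cheating'' coordinate can rescue only $O(1)$ clause-users, dominated by the $\Omega(1)$-per-coordinate loss from its $L$ consistency copies — is exactly what closes this gap. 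For $q=2$ there is no third value, and only the copy-amplification step is needed.
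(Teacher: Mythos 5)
Your proof is correct, and it is substantially more careful than the argument given in the paper. The paper simply asserts that the reduction from Theorem~\ref{th:IEV} is gap-preserving, on the grounds that for vectors of the special form $\hat{\mathbf{x}}=(\mathbf{x},1)$ with $\mathbf{x}$ Boolean the number of users hit equals the number of satisfied clauses (plus user $m+1$), and then cites the Max-3-SAT inapproximability constant. What that argument leaves implicit --- and what you correctly identify as the crux --- is the soundness direction for vectors that do not encode assignments: over $GF(2)$ the vector $(1,\dots,1,0)$ is innovative to all $m$ clause-users of the unamplified instance regardless of satisfiability, and over $GF(q)$ with $q\ge 3$ a non-Boolean coordinate rescues every clause containing that variable at the price of a single consistency-user, so without modification the gap collapses. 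Your two fixes --- starting from bounded-occurrence gap Max-3-SAT and amplifying by $M=m$ copies of user $m+1$ and $L=B+1$ copies of each consistency-user --- are exactly what is needed to make the per-coordinate loss dominate the per-coordinate gain, and your accounting (at most $K-\epsilon_0 m-|S|(L-B)\le K-\epsilon_0 m$ users hit on NO-instances, versus $K$ on YES-instances) is correct, as is the reduction to deciding the promise problem. In short, you reach the same conclusion via the same underlying gadget, but you supply the amplification and the soundness analysis that the paper's one-line gap-preservation claim omits and that are in fact necessary for the theorem to follow.
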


\begin{proof}
Given a Boolean expression in the 3-$\mathsf{CNF}$ form, the problem of maximizing the number of clauses
that have TRUE values is commonly called the $\mathsf{Max}$-3-$\mathsf{SAT}$ problem. Consider the same reduction
described in the proof of Theorem~\ref{th:IEV}. It is clear that the number of clauses that have {\sc True} values
under a given Boolean vector $\mathbf{x}$ is the same as the number of users to whom $\hat{\mathbf{x}} = (\mathbf{x}, 1)$ is innovative, excluding user~$m+1$. Therefore, the reduction is a gap-preserving reduction from $\mathsf{Max}$-3-$\mathsf{SAT}$ to \MAXIEVq. The statement then follows from \cite[Corollary 29.8]{vazirani}.
\end{proof}

%

\section{The Sparsity Problem} \label{sec:sparsity}

Decoding complexity is one of the critical issues that could determine the practicality of linear network coding in broadcast erasure channels. One way to reduce the decoding complexity is to generate sparse encoding vectors and apply a decoding algorithm that exploits the sparsity of encoding vectors at receivers. In this section, we focus on the sparsity issues of innovative encoding vectors.

\subsection{Existence of $K$-sparse innovative vector} \label{sec:ksparse}

In the previous section, it is found that innovative vectors always exist if $q\geq K$. In fact, we can prove a stronger statement that $K$-sparse innovative vectors always exist under the same condition.

\begin{myle} \label{le:key}
For $k=1,2,\ldots, K$, let $f_k(\mathbf{x})$ be a non-zero linear polynomials in $L$ variables
$$
f_k(\mathbf{x}) \triangleq \alpha_{k1} x_1 + \alpha_{k2} x_2 + \cdots + \alpha_{kL} x_L,  \; k = 1, 2, \ldots, K,
$$
where the coefficients are elements in $GF(q)$. If $q \geq K$, we can always find a vector $\mathbf{x}^* = (x_1, x_2, \ldots, x_L) \in GF(q)^L$ such that $f_k(\mathbf{x}^*) \neq 0$ for all~$k$.
\end{myle}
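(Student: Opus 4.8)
The plan is to mimic the counting argument used in the proof of Theorem~\ref{nonemptyI}, but applied to the zero sets of the linear forms rather than to row spaces of matrices. First I would associate to each $f_k$ its zero set $H_k \triangleq \{\mathbf{x} \in GF(q)^L : f_k(\mathbf{x}) = 0\}$. Since $f_k$ is a non-zero linear form, it is a surjective linear map $GF(q)^L \to GF(q)$, so $H_k$ is a subspace of dimension $L-1$ and hence $|H_k| = q^{L-1}$. (Implicitly $L \geq 1$, since a non-zero linear form needs at least one variable, so $q^{L-1}$ and $q^L$ behave as expected.)

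Next I would observe that the vector $\mathbf{x}^*$ we want is precisely any element of $GF(q)^L \setminus \bigcup_{k=1}^K H_k$, so it suffices to show this complement is non-empty, i.e. that $|\bigcup_{k=1}^K H_k| < q^L$. The zero vector lies in every $H_k$, so when $K \geq 2$ we may invoke Lemma~\ref{special_case_of_Schwartz}, which yields $|\bigcup_{k=1}^K H_k| < \sum_{k=1}^K |H_k| = K q^{L-1}$. Using the hypothesis $K \leq q$, this is at most $q \cdot q^{L-1} = q^L$, so the complement is non-empty and any vector in it serves as $\mathbf{x}^*$. The case $K=1$ is immediate: $|H_1| = q^{L-1} < q^L$, so $H_1$ cannot be all of $GF(q)^L$.

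There is no serious obstacle here; the one point that genuinely matters — and that explains why the threshold $q \geq K$ is the right one — is the strict inequality in Lemma~\ref{special_case_of_Schwartz}. When $K = q$ we have $K q^{L-1} = q^L$ exactly, so the ordinary union bound would only give $|\bigcup_k H_k| \leq q^L$, which is not enough to conclude. The common membership of the zero vector in all the $H_k$ is exactly what shaves off the needed unit and closes the argument.
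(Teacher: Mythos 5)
Your proof is correct and is essentially identical to the paper's first (combinatorial) proof: both identify the zero sets as hyperplanes of size $q^{L-1}$ containing the common zero vector, apply Lemma~\ref{special_case_of_Schwartz} to get the strict union bound $Kq^{L-1} \leq q^L$, and conclude the complement is non-empty. Your explicit handling of the $K=1$ case (where the lemma's $K\geq 2$ hypothesis fails) is a small tidiness improvement over the paper's version.
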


We first give a combinatorial proof:

\begin{proof}
For $k=1,2,\ldots, K$, let $V_k$ be the set of vectors $\mathbf{x}$ in $GF(q)^L$ satisfying $f_k(\mathbf{x}) = 0$. The set $V_k$ is a subspace of dimension $L-1$. By Lemma~\ref{special_case_of_Schwartz}, the cardinality of the union of these $K$ subspaces is strictly less than $K q^{L-1}$ elements, which in turn is less than or equal to the cardinality of the whole space $GF(q)^L$. Thus there exists at least one vector $\mathbf{x}^*$ in $GF(q)^L$ such that $f_k(\mathbf{x}^*) \neq 0$ for all~$k$.
\end{proof}

Now we give an alternative proof of Lemma~\ref{le:key}, which is algorithmic and constructive:

\begin{proof}
Let ${\cal S}_l$, where $l=1,2,\ldots,L$, be the index set such that $k \in {\cal S}_l$ if and only if $\alpha_{kl} \neq 0$. Since none of the linear polynomials $f_k(\mathbf{x})$'s are identically zero, the union $\bigcup_{l=1}^L \mathcal{S}_l$ is equal to $\{1,2,\ldots, K\}$.
We distinguish two cases:

{\em Case 1:} $|{\cal S}_l| = K$ for some $l$. We can simply let $x_l^* = 1$ and $x_n^* = 0$ for $n\neq l$.

{\em Case 2:} $|{\cal S}_l| < K$ for all $l$. We assign values to the variables iteratively. Suppose we have already assign
$x_1^*, x_2^*, \ldots, x_{t-1}^*$ to the first $t-1$ variables. We note that  $f_k(x_1^*,\ldots, x_{t-1}^*, x_t, 0, \ldots, 0)=0$ is a linear equation in a single variable $x_t$, and thus have only one solution.  As $|\mathcal{S}_l| < K \leq q$, the number of elements in $GF(q)$ which satisfy $f_k(x_1^*, \ldots, x_{t-1}^*, x_t, 0, \ldots, 0)=0$ for some $k\in \mathcal{S}_l$ is strictly less than $q$. There must exist $x_t^* \in GF(q)$ such that
\[
 f_k(x_1^*, x_2^*,\ldots, x_{t-1}^*, x_t^*, 0, 0, \ldots, 0) \neq 0
\]
for all $k\in \mathcal{S}_l$. Upon termination, it is guaranteed that $f_k(x_1^*, x_2^*,\ldots, x_L^*) \neq 0$ for all $k$.
\end{proof}

We call the method in the proof of Lemma~\ref{le:key} the {\em Sequential Assignment} (SA) algorithm. Its computational complexity in terms of number of multiplications/divisions over $GF(q)$ is analyzed as follows: In this algorithm, there are $L$ iterations. In each iteration, we need to find an element in $GF(q)$ that is not a root of any of these $K$ equations. Consider the $t$-th iteration. For the $k$-th equation, we need to compute $\alpha_{k,t-1} x_{t-1}^*$ and add it to the accumulated sum $\sum_{j=1}^{t-2} \alpha_{kj} x_j$, which is stored for the next iteration. The root of this equation can then be obtained by a division. Therefore, the total complexity of SA is $O(KL)$.


\begin{myexam}
Let
\begin{align*}
f_1(\mathbf{x}) &\triangleq x_1 + 2x_2\\
f_2(\mathbf{x}) &\triangleq x_2 + 2x_3\\
f_3(\mathbf{x}) &\triangleq 2x_1 + x_3
\end{align*}
be $K=3$ linear polynomials over $GF(3)$. We apply the SA algorithm to find an assignment of $\mathbf{x}=(x_1, x_2, x_3)$ such that $f_1(\mathbf{x})$, $f_2(\mathbf{x})$ and $f_3(\mathbf{x})$ are all non-zero. First of all, the three index sets are
$\mathcal{S}_1 = \{1,3\}$, $\mathcal{S}_2 = \{1,2\}$, and
$\mathcal{S}_3 = \{2,3\}$.
None of them has cardinality three. We proceed as described in the second case. We assign an arbitrary non-zero value to $x_1$, say $x_1=1$, and we can check that
$f_1(1,0,0) = 1$, $f_2(1,0,0) = 0$, $f_3(1,0,0)=2$.

Next, we want to find $x_2 \in GF(3)$ such that
\begin{align*}
f_1(1,x_2,0) &= 1 + 2x_2 \neq 0, \text{ and}\\
f_2(1,x_2,0) &= x_2 \neq 0.
\end{align*}
It turns out that the only choice for $x_2$ is $x_2=2$. After $x_2$ is fixed, we search for $x_3 \in GF(3)$ such that
\begin{align*}
f_2(1,2,x_3) &= 2+2x_3 \neq 0 \\
f_3(1,2,x_3) &= 2+x_3 \neq 0.
\end{align*}
The only choice for $x_3$ is $x_3=0$. Finally, we check the values of $f_1$, $f_2$ and $f_3$ evaluated at $\mathbf{x}=(1, 2, 0)$ as follows:
\[
f_1(1,2,0) = f_2(1,2,0) = f_3(1,2,0) = 2 \neq 0.
\]
\begin{flushright}
$\Box$
\end{flushright}
\end{myexam}

Lemma~\ref{le:key} can be used to establish the following result:

\begin{myth} \label{th:sparse}
If $q \geq K$, there exists a $K$-sparse encoding vector in ${\cal I}$.
\end{myth}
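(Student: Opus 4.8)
The plan is to reduce Theorem~\ref{th:sparse} to Lemma~\ref{le:key} by a dimension count that isolates $K$ well-chosen coordinates. First I would pick, for each user $k$, a single nonzero linear functional that detects innovativeness with respect to user $k$. Concretely, since $r_k < N$, the orthogonal complement $V_k^\perp$ is nontrivial, so I can choose any nonzero row $\mathbf{b}_k$ of $\mathbf{B}_k$ (equivalently any nonzero vector in $V_k^\perp$) and set $g_k(\mathbf{x}) \triangleq \mathbf{b}_k \cdot \mathbf{x}$. By Lemma~\ref{th:characterization}, or more simply by the defining property $\mathbf{x} \in V_k \iff \mathbf{x}\cdot\mathbf{v}=0 \ \forall \mathbf{v}\in V_k^\perp$, if $g_k(\mathbf{x}) \neq 0$ then $\mathbf{x} \notin V_k$, i.e.\ $\mathbf{x}$ is innovative to user~$k$. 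So it suffices to find a $K$-sparse $\mathbf{x}$ with $g_k(\mathbf{x}) \neq 0$ for all $k$.

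Next I would cut down the number of active variables to $K$. For each $k$, since $\mathbf{b}_k \neq \bzero$, pick an index $l_k \in supp(\mathbf{b}_k)$; let $\mathcal{L} \triangleq \{ l_1, l_2, \ldots, l_K \}$, a set of at most $K$ coordinates. Restrict attention to vectors $\mathbf{x}$ supported on $\mathcal{L}$, i.e.\ write $\mathbf{x} = \sum_{l \in \mathcal{L}} x_l \mathbf{e}_l$. Then each $g_k$ becomes a linear polynomial $f_k$ in the $|\mathcal{L}| \le K$ variables $\{x_l : l \in \mathcal{L}\}$, and crucially $f_k$ is \emph{not} identically zero because the coefficient of $x_{l_k}$ in $f_k$ is $(\mathbf{b}_k)_{l_k} \neq 0$. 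Now apply Lemma~\ref{le:key} with $L = |\mathcal{L}| \le K \le q$: since $q \geq K \ge$ (number of polynomials), there is an assignment $x_l^*$, $l \in \mathcal{L}$, making all $f_k(\mathbf{x}^*) \neq 0$. Extending by zeros outside $\mathcal{L}$ gives a vector $\mathbf{x}^* \in GF(q)^N$ with Hamming weight at most $|\mathcal{L}| \leq K$ and $g_k(\mathbf{x}^*) \neq 0$ for all $k$, hence $\mathbf{x}^* \in \mathcal{I}$.

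One subtlety to handle cleanly: the $l_k$'s need not be distinct, so $|\mathcal{L}|$ may be strictly less than $K$; that is fine and only helps, since Lemma~\ref{le:key} is stated for any number of variables $L$ and the condition $q \geq K$ (with $K$ polynomials) is what matters. I should also make sure I apply Lemma~\ref{le:key} with the right count — it needs the field size to be at least the number of polynomials, which here is exactly $K$, matching the hypothesis $q \geq K$.

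The main obstacle is essentially bookkeeping rather than a deep difficulty: I must verify that restricting to the coordinate set $\mathcal{L}$ does not accidentally kill one of the functionals, which is precisely why the index $l_k$ must be chosen \emph{inside} the support of $\mathbf{b}_k$ — this guarantees $f_k \not\equiv 0$ after restriction. If instead one chose a common coordinate set without this support condition, some $f_k$ could vanish identically and Lemma~\ref{le:key} would not apply. So the crux is the careful selection of one detecting functional per user together with a coordinate in its support; everything else is a direct invocation of Lemma~\ref{le:key} and Lemma~\ref{th:characterization}.
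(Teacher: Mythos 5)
Your proposal is correct and follows essentially the same route as the paper's proof: pick one nonzero row $\mathbf{b}_k$ of $\mathbf{B}_k$ per user, select an index $n_k$ in its support to form the support set, truncate the functionals to that set, and invoke Lemma~\ref{le:key} (the paper's set $\mathcal{N}$ is your $\mathcal{L}$). Your explicit remark that choosing $l_k\in supp(\mathbf{b}_k)$ is what keeps each restricted polynomial from vanishing identically is exactly the point the paper relies on as well.
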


\begin{proof}
For $k=1, 2, \ldots, K$, let $\mathbf{b}_k^T$ be an arbitrary row vector in $\mathbf{B}_k$, and let $n_k$ be an arbitrary index such that the $n_k$-th component of $\mathbf{b}_k$ is non-zero.
Form a new index set ${\cal N}$ that consists of all $n_k$'s. The cardinality of ${\cal N}$ may be less than $K$ since the $n_k$'s may not be distinct. Let $\mathbf{b}_k({\cal N})$ be a truncated vector of $\mathbf{b}_k$, which consists of only the components of $\mathbf{b}_k$ whose indices are in~${\cal N}$. Its dimension is equal to $|{\cal N}| \leq K$.

Now we show that there exists a vector $\mathbf{x} \in {\cal I}$ such that the $i$-th component of $\mathbf{x}$ is equal to zero if $i \not \in {\cal N}$.
If the $i$-th component of $\mathbf{x}$ is zero for all $i\not\in \mathcal{N}$, then the inner product of $\mathbf{b}_k$ and $\mathbf{x}$ is the same as the inner product of $\mathbf{b}_k(\mathcal{N})$ and $\mathbf{x}(\mathcal{N})$. According to Lemma~\ref{th:characterization}, $\mathbf{x}$ is in ${\cal I}$ if $\mathbf{b}_k({\cal N}) \cdot \mathbf{x}({\cal N}) \neq 0$ for all $k$'s. By Lemma~\ref{le:key}, we can find such a vector $\mathbf{x}$ if $q \geq K$. Clearly, such a vector has $\mathcal{N}$ as its support, and is hence $K$-sparse.
\end{proof}

The above result shows that if $q\geq K$, the minimum Hamming weight of innovative vectors is bounded above by $K$. This upper bound cannot be further reduced as the following example shows:

\begin{myexam}
Consider a broadcast system of $K$ users and $N$ packets, where $N \geq K$. Suppose that user $k$ has received a set of uncoded packets $\mathcal{A}_k$. Here we regard $\mathcal{A}_k$ as a subset of $\{1,2,\ldots, N\}$.
Furthermore, suppose that the complement of the $\mathcal{A}_k$'s are mutually disjoint, i.e., $\mathcal{A}_j^c \cap \mathcal{A}_k^c = \emptyset$ for $j\neq k$.
In such a scenario, an innovative packet must be a linear combination of at least $K$ packets. For example, let $N=4$ and $K=3$. If the encoding matrices of the three users are
\[
\begin{bmatrix}
1 & 0& 0& 0 \\
0 & 1& 0& 0
\end{bmatrix}, \
\begin{bmatrix}
1 & 0& 0& 0 \\
0 & 0& 1& 0 \\
0 & 0& 0& 1
\end{bmatrix}, \
\begin{bmatrix}
0 & 1& 0& 0 \\
0 & 0& 1& 0 \\
0 & 0& 0& 1
\end{bmatrix},
\]
then an innovative encoding vector must have Hamming weight at least 3. For instance $(1, 1, 1 ,0)$ and $(1,1,0,1)$ are innovative, but no vector with Hamming weight 2 or less is innovative. \hfill $\Box$
\end{myexam}

\subsection{Sparsest Innovative Vectors}

Theorem~\ref{th:sparse} shows that we can always find a $K$-sparse innovative vector if $q \geq K$. It serves as an upper bound on the minimum Hamming weight of innovative vectors. To further reduce the decoding complexity, it is natural to consider the issue of finding the sparsest innovative encoding vector for given $\mathbf{C}_k$'s. In other words, we want to find a vector in ${\cal I}$ that has the minimum Hamming weight for the case where $q \geq K$. We call this algorithmic problem {\sc Sparsity}. We state its decision version formally as follows:

\smallskip

{\bf Problem:} {\sc Sparsity}

{\bf Instance:} A positive integer $n$ and $K$ matrices with $N$ columns, $\mathbf{C}_1, \mathbf{C}_2, \ldots, \mathbf{C}_K$, over $GF(q)$, where $q \geq K$.

{\bf Question:} Is there a vector $\mathbf{x} \in \mathcal{I}$ with Hamming weight less than or equal to $n$?

We have already proven that the answer is always {\sc Yes} if $n \geq K$. We are interested in the case where $n < K$.

Given all $\mathbf{C}_k$'s, we can find a basis  $\mathbf{B}_k$'s  of their corresponding null spaces by the method mentioned in Section~\ref{sec:secinno}. For $k=1, 2, \ldots, K$, let $\mathbf{b}_{k,i}^T$ be the $i$-th row of $\mathbf{B}_k$. We define
\begin{equation}
\tilde{\mathbf{b}}_k \triangleq \vee_{i=1}^{N-r_k} \mathbf{b}_{k,i}, \label{btilde}
\end{equation}
where $\vee$ denotes the logical-OR operator applied component-wise to the $N-r_k$ vectors, with each non-zero component being regarded as a ``1''. In other words, the $j$-th component of $\tilde{\mathbf{b}}_k$ is one if and only if the $j$-th column of $\mathbf{B}_k$ is nonzero. We define $\mathbf{B}$ as the $K\times N$ matrix whose $k$-th row is equal to $\tilde{\mathbf{b}}_k^T$. Note that $\mathbf{B}$ is a binary matrix and has no zero rows. For a matrix $\mathbf{A}$ and a subset $\cal N$ of the column indices of $\mathbf{A}$, let $\mathbf{A}(\cal N)$ be the $K \times |{\cal N}|$ submatrix of $\mathbf{A}$, whose columns are chosen according to ${\cal N}$. We need the following lemma:

\begin{myle} \label{le:index_set}
Let ${\cal N} \subseteq \{1, 2, \ldots, N\}$ be an index set and $q \geq K$. There exists an encoding vector $\mathbf{x} = (x_1, x_2, \ldots, x_N) \in {\cal I}$ over $GF(q)$ with $supp(\mathbf{x}) \subseteq {\cal N}$ if and only if $\mathbf{B}({\cal N})$ has no zero rows.
\end{myle}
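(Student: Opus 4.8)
The plan is to prove both directions by relating the row-nonzero structure of $\mathbf{B}(\mathcal{N})$ to the existence of a single vector $\mathbf{x}$ supported on $\mathcal{N}$ that evades all the null-spaces simultaneously. By Lemma~\ref{th:characterization}, $\mathbf{x}\in\mathcal{I}$ iff $\mathbf{B}_k\mathbf{x}\neq\bzero$ for every $k$, i.e.\ for each $k$ there is at least one row $\mathbf{b}_{k,i}$ of $\mathbf{B}_k$ with $\mathbf{b}_{k,i}\cdot\mathbf{x}\neq 0$. Restricting attention to $\mathbf{x}$ with $supp(\mathbf{x})\subseteq\mathcal{N}$, each inner product $\mathbf{b}_{k,i}\cdot\mathbf{x}$ depends only on the entries of $\mathbf{b}_{k,i}$ indexed by $\mathcal{N}$, that is on the row of $\mathbf{B}_k(\mathcal{N})$.

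For the ``only if'' direction I would argue the contrapositive: if $\mathbf{B}(\mathcal{N})$ has a zero row, say row $k$, then by the definition of $\tilde{\mathbf{b}}_k$ in~\eqref{btilde} every column of $\mathbf{B}_k$ indexed by $\mathcal{N}$ is the zero column, so every row $\mathbf{b}_{k,i}$ of $\mathbf{B}_k$ vanishes on all coordinates in $\mathcal{N}$. Hence for any $\mathbf{x}$ with $supp(\mathbf{x})\subseteq\mathcal{N}$ we get $\mathbf{B}_k\mathbf{x}=\bzero$, so $\mathbf{x}\notin\mathcal{I}$ by Lemma~\ref{th:characterization}; no such innovative vector exists.

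For the ``if'' direction, assume $\mathbf{B}(\mathcal{N})$ has no zero row. Then for each $k$, the $k$-th row $\tilde{\mathbf{b}}_k^T$ has a nonzero entry inside $\mathcal{N}$, meaning some column of $\mathbf{B}_k$ with index in $\mathcal{N}$ is nonzero; pick a row $\mathbf{b}_k$ of $\mathbf{B}_k$ and an index $n_k\in\mathcal{N}$ with the $n_k$-th component of $\mathbf{b}_k$ nonzero. This is exactly the setup in the proof of Theorem~\ref{th:sparse}: the truncated vectors $\mathbf{b}_k(\mathcal{N})$ are nonzero, so the linear forms $\mathbf{x}(\mathcal{N})\mapsto \mathbf{b}_k(\mathcal{N})\cdot\mathbf{x}(\mathcal{N})$ are nonzero linear polynomials in $|\mathcal{N}|$ variables. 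Since $q\geq K$, Lemma~\ref{le:key} yields an assignment $\mathbf{x}(\mathcal{N})$ making all $K$ of these forms nonzero; extending by zeros outside $\mathcal{N}$ gives a vector $\mathbf{x}$ with $supp(\mathbf{x})\subseteq\mathcal{N}$ and $\mathbf{B}_k\mathbf{x}=\mathbf{B}_k(\mathcal{N})\mathbf{x}(\mathcal{N})\neq\bzero$ for all $k$, hence $\mathbf{x}\in\mathcal{I}$ by Lemma~\ref{th:characterization}.

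There is no real obstacle here; the lemma is essentially a bookkeeping reformulation of Theorem~\ref{th:sparse} with $\mathcal{N}$ in place of the canonical support. The only point that needs care is checking that the "at least one nonzero inner product per user" condition correctly corresponds to "no zero row of $\mathbf{B}(\mathcal{N})$" — in particular that it suffices to control a single row $\mathbf{b}_k$ of each $\mathbf{B}_k$ (the others are free), which is exactly what Lemma~\ref{th:characterization} permits, and that collapsing all rows of $\mathbf{B}_k$ into $\tilde{\mathbf{b}}_k$ via logical-OR does not lose the relevant information because we only need \emph{existence} of one nonzero coordinate in $\mathcal{N}$.
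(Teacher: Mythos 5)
Your proof is correct and follows essentially the same route as the paper: the forward direction picks, for each $k$, a row of $\mathbf{B}_k$ whose restriction to $\mathcal{N}$ is nonzero and invokes Lemma~\ref{le:key} (with Lemma~\ref{th:characterization}), and the converse is the same contrapositive observation that a zero row of $\mathbf{B}(\mathcal{N})$ forces $\mathbf{B}_k(\mathcal{N})$ to be the zero matrix. No gaps.
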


\begin{proof}
If $\mathbf{B}({\cal N})$ has no zero rows, then $\tilde{\mathbf{b}}_k({\cal N}) \neq \bzero$ for all $k$'s. Furthermore, for all $k$'s, there must exist $\mathbf{b}_{k,j}({\cal N}) \neq \bzero$ for some $j$.  By Lemma~\ref{le:key}, we can find $\mathbf{x}({\cal N}) \in GF(q)^{|{\cal N}|}$ such that $\mathbf{b}_{k,j}({\cal N}) \cdot \mathbf{x}({\cal N}) \neq 0$ for all $k$'s. Let the components of $\mathbf{x}$ whose indices do not belong to ${\cal N}$ be zero. Then by Lemma~\ref{th:characterization}, $\mathbf{x} \in {\cal I}$.

Conversely, if $\mathbf{x}$ is an innovative vector with $x_n = 0$ for $n \not \in {\cal N}$, then $\mathbf{B}({\cal N})$ cannot have zero rows, for if row $k$ of $\mathbf{B}({\cal N})$ is a zero vector, then $\mathbf{B}_k({\cal N})$ is a zero matrix and the $k$-th inequality in Lemma~\ref{th:characterization} cannot hold.
\end{proof}

The NP-completeness of {\sc Sparsity} can be established by reducing the hitting set problem, {\sc HittingSet}, to {\sc Sparsity}. Recall that a problem instance of {\sc HittingSet} consists of a collection $\mathscr{C}$ of subsets of a finite set ${\cal U}$. A {\em hitting set} for $\mathscr{C}$ is a subset of ${\cal U}$ such that it contains at least one element from each subset in $\mathscr{C}$. The decision version of this problem is to determine whether we can find a hitting set with cardinality less than or equal to a given value.

{\bf Problem:} {\sc HittingSet}

{\bf Instance:} A finite set $\mathcal{U}$, a collection $\mathscr{C}$ of subsets of $\mathcal{U}$ and an integer $n$.

{\bf Question:} Is there a subset $\mathcal{S} \subseteq \mathcal{U}$ with cardinality less than or equal to $n$ such that for each $\mathcal{C} \in \mathscr{C}$ we have $\mathcal{C} \cap \mathcal{S} \neq \emptyset$?

It is well known that {\sc HittingSet} is NP-complete~\cite{GareyJohnson}.

\begin{myexam}
Let $\mathcal{U} = \{1,2,3,4,5\}$,
\[\mathscr{C} = \{ \{1,2,3\}, \{2,3,4\}, \{4,5\}\}
\]
and $n=2$. We can check that $\{1,4\}$ is a hitting set of size $n=2$. \hfill $\Box$
\end{myexam}

\begin{myth} \label{th:NP}
{\sc Sparsity} is NP-complete.
\end{myth}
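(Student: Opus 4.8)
The plan is to give a Karp-reduction from {\sc HittingSet} to {\sc Sparsity}. The key observation is that Lemma~\ref{le:index_set} already recasts ``is there an innovative vector of Hamming weight at most $n$?'' as a purely combinatorial condition on the binary matrix $\mathbf{B}$ associated with the $\mathbf{C}_k$'s: an innovative $\mathbf{x}$ with $supp(\mathbf{x})\subseteq\mathcal{N}$ exists if and only if $\mathbf{B}(\mathcal{N})$ has no zero row. Hence it suffices to build a {\sc Sparsity} instance in which the rows of $\mathbf{B}$ have supports prescribed by a given set system.

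Given a {\sc HittingSet} instance $(\mathcal{U},\mathscr{C},n)$ with $\mathscr{C}=\{\mathcal{C}_1,\ldots,\mathcal{C}_K\}$, identify $\mathcal{U}$ with $\{1,\ldots,N\}$, $N=|\mathcal{U}|$, and assume every $\mathcal{C}_k\neq\emptyset$ (otherwise output a fixed ``no'' instance). Pick a prime power $q\ge K$, for instance a prime in $(K,2K)$, which exists by Bertrand's postulate and is found in polynomial time. For each $k$, let $\mathbf{b}_k\in GF(q)^N$ be the $0/1$ indicator vector of $\mathcal{C}_k$, and let $\mathbf{C}_k$ be an $(N-1)\times N$ matrix over $GF(q)$ whose rows form a basis of the orthogonal complement of $\mathrm{span}(\mathbf{b}_k)$; an explicit such basis, using the vectors $\mathbf{e}_j$ for $j\notin\mathcal{C}_k$ together with $\mathbf{e}_j-\mathbf{e}_{j_0}$ for $j\in\mathcal{C}_k\setminus\{j_0\}$ (where $j_0$ is a fixed element of $\mathcal{C}_k$), is trivial to write down, and $\mathbf{C}_k$ has rank $N-1<N$. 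Output the instance $(n,\mathbf{C}_1,\ldots,\mathbf{C}_K)$ over $GF(q)$; the construction is clearly polynomial time.

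Since $(W^\perp)^\perp=W$, we have $V_k^\perp=\mathrm{span}(\mathbf{b}_k)$, so $\mathbf{B}_k$ may be taken to be $\mathbf{b}_k^T$ and row $k$ of $\mathbf{B}$ is exactly the indicator of $\mathcal{C}_k$. Therefore $\mathbf{B}(\mathcal{N})$ has no zero row if and only if $\mathcal{N}\cap\mathcal{C}_k\neq\emptyset$ for every $k$, i.e. if and only if $\mathcal{N}$ is a hitting set for $\mathscr{C}$. Combining this with Lemma~\ref{le:index_set}: a hitting set $\mathcal{S}$ of size at most $n$ gives, via Lemma~\ref{le:key}, some $\mathbf{x}\in\mathcal{I}$ with $supp(\mathbf{x})\subseteq\mathcal{S}$, hence of Hamming weight at most $n$; conversely, any $\mathbf{x}\in\mathcal{I}$ of Hamming weight at most $n$ has $supp(\mathbf{x})$ a hitting set of size at most $n$. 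So the {\sc HittingSet} instance is a yes-instance precisely when the derived {\sc Sparsity} instance is. Finally, {\sc Sparsity} is in NP: a candidate $\mathbf{x}$ is checked by verifying its Hamming weight is at most $n$ and, using Lemma~\ref{th:characterization}, that $\mathbf{B}_k\mathbf{x}\neq\bzero$ for every $k$, both in polynomial time. Hence {\sc Sparsity} is NP-complete.

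The argument is essentially routine once Lemma~\ref{le:index_set} is available; the only points that call for care are (i) recognizing that the sparsity question is literally a hitting-set question once one passes to $\mathbf{B}$, (ii) producing the instance in terms of the $\mathbf{C}_k$'s rather than the $\mathbf{B}_k$'s, as the problem statement demands, and (iii) honoring the side condition $q\ge K$ while keeping the reduction polynomial, which is handled by choosing a prime just above $K$. I do not expect any substantial obstacle beyond this bookkeeping.
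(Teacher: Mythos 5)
Your proposal is correct and follows essentially the same route as the paper: a Karp-reduction from {\sc HittingSet} in which each set $\mathcal{C}_k$ becomes the single-row matrix $\mathbf{B}_k=\mathbf{b}_k^T$ (its characteristic vector), $\mathbf{C}_k$ is taken as a basis of the corresponding null space, and Lemmas~\ref{th:characterization} and~\ref{le:index_set} give the equivalence between small hitting sets and sparse innovative vectors. Your explicit basis for $\mathbf{C}_k$ and the remark about choosing a prime $q\ge K$ via Bertrand's postulate are useful bookkeeping details that the paper leaves implicit, but they do not change the argument.
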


\begin{proof}
We are going to reduce {\sc HittingSet} to an instance of {\sc Sparsity} via a Karp-reduction.
Let the cardinality of ${\cal U}$ be $N$. Label the elements of ${\cal U}$ by $1, 2, \ldots, N$. We define $\mathscr{C} \triangleq \{{\cal{C}}_1, {\cal{C}}_2, \ldots, {\cal{C}}_K\}$, where $K$ is the number of non-empty subsets in $\mathscr{C}$. For $k=1, 2, \ldots, K$, form an $N$-vector $\mathbf{b}_k \in GF(q)^{N}$ with its $i$-th component equal to one if $i$ is in ${\cal{C}}_k$ and zero otherwise, i.e.,  $\mathbf{b}_k$ is the characteristic vector of  ${\cal{C}}_k$. Note that $\mathbf{b}_k \neq \bzero$ and $\mathscr{C} = \{supp(\mathbf{b}_1), supp(\mathbf{b}_2), \ldots, supp(\mathbf{b}_K)\}$. These $\mathbf{b}_k$'s correspond to the degenerate form of $\mathbf{B}_k$'s in Lemma~\ref{th:characterization} with only one row in $\mathbf{B}_k$. Let $\mathbf{C}_k$ be the encoding matrix of user~$k$, whose row space is the null space of $\mathbf{B}_k$ and ${\cal I}$ be the innovative vector set defined in (\ref{inno}). In other words, any instance of {\sc HittingSet} can be represented as an instance of {\sc Sparsity} in polynomial time.

It remains to show that there exists a hitting set $\cal{H}$ for $\mathscr{C}$ with $|{\cal{H}}| \leq n$ if and only if there exists an $\mathbf{x} \in {\cal I}$ with Hamming weight $|supp(\mathbf{x})| \leq n$. Given the $\mathbf{b}_k$'s obtained via the above reduction, suppose there exists $\mathbf{x} \in {\cal I}$ with $|supp(\mathbf{x})| \leq n$. By Lemma~\ref{th:characterization}, we must have $\mathbf{b}_k \cdot \mathbf{x} \neq 0$ for all $k$'s, which implies $supp(\mathbf{b}_k) \cap supp(\mathbf{x}) \neq \emptyset$ for all $k$'s. The set $supp(\mathbf{x})$ is therefore a hitting set for the given instance. Conversely, given a hitting set $\cal{H}$ for $\mathscr{C}$ with $|{\cal{H}}| \leq n$, by definition $supp(\mathbf{b}_k) \cap {\cal H} \neq \emptyset$ for all $k$'s. Therefore, $\mathbf{B}({\cal H})$ has no zero rows. By Lemma~\ref{le:index_set}, there exists an
$\mathbf{x} \in  GF(q)^{N}$ such that $supp(\mathbf{x}) \subseteq {\cal H}$. Hence, $|supp(\mathbf{x})| \leq n$.

As {\sc Sparsity} is verifiable in polynomial time, {\sc Sparsity} is in NP. Hence it is NP-complete.
\end{proof}

Now we define the optimization version of {\sc Sparsity} as follows:

{\bf Problem:} {\sc Max Sparsity}

{\bf Instance:} A positive integer $n$ and $K$ matrices with $N$ columns, $\mathbf{C}_1, \mathbf{C}_2, \ldots, \mathbf{C}_K$, over $GF(q)$, where $q \geq K$.

{\bf Objective:} Find a vector $\mathbf{x} \in \mathcal{I}$ with minimum Hamming weight.

We call the minimum Hamming weight among all innovative vectors the {\em sparsity number}, and denote it by~$\omega$. It is easy to see that if a polynomial-time algorithm can be found for solving the optimization version of {\sc Sparsity}, then that algorithm can be used for solving the decision version of {\sc Sparsity} in polynomial time as well. Therefore, {\sc Max Sparsity} is NP-hard.

On the other hand, if $K$ is held fixed, meaning that the problem size grows only with $N$, then there exists algorithm whose complexity grows polynomially in $N$ to solve {\sc Max Sparsity}. It is proven in \cite{KSS11} and Section~\ref{sec:ksparse} that a $K$-sparse vector exists in ${\cal I}$, if $q\geq K$. By listing all vectors in $GF(q)^N$ with Hamming weight less than or equal to $K$, we can use Lemma~\ref{th:characterization} to check whether each of them is in ${\cal I}$. For each $K$-sparse encoding vector, we compute the matrix product $\mathbf{B}_k \mathbf{x}$ for $k=1,2,\ldots, K$. Each matrix product takes $O(NK)$ finite field operations. The total number of finite field operations for each candidate $\mathbf{x}$ is $O(NK^2)$. After checking all $K$-sparse encoding vectors, we can then find one with minimum Hamming weight. The number of non-zero vectors in $GF(q)^N$ with Hamming weight no more than $K$ is equal to $\sum_{k=1}^K \binom{N}{k} (q-1)^{k}$. For fixed $K$ and $q$, the summation is dominated by the largest term $\binom{N}{K}(q-1)^K$ when $N$ is large, which is of order $O(N^K)$. The brute-force method can solve the problem with time complexity of $O(N^K (NK^2))$. As $K$ is held fixed, {\sc Max Sparsity} can be solved in polynomial time in~$N$.

Let {\sc Min HittingSet} be the minimization version of the hitting set problem, in which we want to find a hitting set with minimum cardinality.
The next result shows that {\sc Max Sparsity} can be solved via {\sc Min HittingSet} based on the concept of Levin-reduction.

\begin{myth} \label{th:levin_reduction}
{\sc Max Sparsity} can be Levin-reduced to {\sc Min HittingSet}.
\end{myth}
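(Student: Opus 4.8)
The plan is to exhibit explicitly the pair of polynomial-time maps $(f,g)$ required by Definition~\ref{def:levin}, with $R$ the search relation of {\sc Max Sparsity} (so that $R(x)$, for an instance $x = (\mathbf{C}_1,\ldots,\mathbf{C}_K)$ over $GF(q)$ with $q\geq K$, is the set of innovative vectors of minimum Hamming weight) and $R'$ the search relation of {\sc Min HittingSet} (so that $R'(x')$ is the set of hitting sets of minimum cardinality). The map $f$ is essentially the construction already used in the proof of Theorem~\ref{th:NP}, but read in the opposite direction: given $\mathbf{C}_1,\ldots,\mathbf{C}_K$, first compute a basis $\mathbf{B}_k$ of the null space of each $\mathbf{C}_k$ (via the RREF construction of Section~\ref{sec:secinno}), then form the binary matrix $\mathbf{B}$ whose $k$-th row is $\tilde{\mathbf{b}}_k^T$ as in~\eqref{btilde}, and output the {\sc Min HittingSet} instance with ground set $\mathcal{U}=\{1,2,\ldots,N\}$ and collection $\mathscr{C}=\{supp(\tilde{\mathbf{b}}_1),\ldots,supp(\tilde{\mathbf{b}}_K)\}$. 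Each step runs in time polynomial in the input size, so $f$ is polynomial-time; and since $q\geq K$ forces $\mathcal{I}\neq\emptyset$ (Theorem~\ref{nonemptyI}) while $\mathcal{U}$ is always a hitting set for $\mathscr{C}$ (the rows of $\mathbf{B}$ are nonzero), both $S_R$ and $S_{R'}$ are the sets of all well-formed instances, so $f$ is trivially a Karp-reduction of $S_R$ to $S_{R'}$.

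Next I would establish the value correspondence: the sparsity number $\omega$ of $x$ equals the minimum cardinality $h$ of a hitting set of $f(x)$. If $\mathbf{x}^{*}\in\mathcal{I}$ attains $|supp(\mathbf{x}^{*})|=\omega$, then by the converse part of Lemma~\ref{le:index_set} the submatrix $\mathbf{B}(supp(\mathbf{x}^{*}))$ has no zero rows, i.e.\ $supp(\mathbf{x}^{*})$ meets every set in $\mathscr{C}$, so $h\leq\omega$. Conversely, if $\mathcal{H}$ is a hitting set of size $h$, then $\mathbf{B}(\mathcal{H})$ has no zero rows, so the forward part of Lemma~\ref{le:index_set} produces an innovative vector supported inside $\mathcal{H}$, whence $\omega\leq h$. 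Combining gives $\omega=h$.

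Then the map $g$ takes an instance $x$ together with a minimum hitting set $\mathcal{H}$ for $f(x)$ and returns an innovative vector as follows: since $\mathbf{B}(\mathcal{H})$ has no zero rows, for each $k$ there is a row $\mathbf{b}_{k,j_k}$ of $\mathbf{B}_k$ with $\mathbf{b}_{k,j_k}(\mathcal{H})\neq\bzero$; applying the Sequential Assignment algorithm of Lemma~\ref{le:key} to the $|\mathcal{H}|$-variable linear forms $\mathbf{b}_{k,j_k}(\mathcal{H})\cdot\mathbf{y}$, $k=1,\ldots,K$ (there are at most $K\leq q$ of them), yields in $O(K|\mathcal{H}|)$ field operations a $\mathbf{y}\in GF(q)^{|\mathcal{H}|}$ making all of them nonzero; embedding $\mathbf{y}$ into $GF(q)^N$ with zeros outside $\mathcal{H}$ gives $\mathbf{x}=g(x,\mathcal{H})$, which by Lemma~\ref{th:characterization} lies in $\mathcal{I}$ and satisfies $supp(\mathbf{x})\subseteq\mathcal{H}$. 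Hence $|supp(\mathbf{x})|\leq|\mathcal{H}|=h=\omega$, and since $\omega$ is the minimum Hamming weight among innovative vectors, $|supp(\mathbf{x})|=\omega$; that is, $\mathbf{x}\in R(x)$, which is exactly the second requirement of Definition~\ref{def:levin}. As $g$ is clearly polynomial-time, $(f,g)$ is the desired Levin-reduction.

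The main obstacle — and really the only delicate point — is the optimality-preservation step: a feasibility-level correspondence between {\sc Sparsity} and {\sc HittingSet} is already contained in Theorem~\ref{th:NP}, but a Levin-reduction demands in addition that an \emph{optimal} solution of the target be mapped to an \emph{optimal} solution of the source. This rests on the exact identity $\omega=h$ (so that a minimum hitting set is never ``too small'' to carry an innovative vector) together with the fact that Lemma~\ref{le:key}/the SA algorithm lets us realize a vector on any support whose induced submatrix of $\mathbf{B}$ has no zero rows. Both ingredients are already in hand, so once the correspondence between instances and between their optimal values is set up correctly, the remainder is bookkeeping.
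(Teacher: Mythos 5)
Your proof is correct and follows essentially the same route as the paper's: the same map to {\sc Min HittingSet} via the $\tilde{\mathbf{b}}_k$'s, the same use of Lemma~\ref{le:index_set} in both directions to show the optimal values coincide, and the same use of the SA algorithm as the solution-recovery map $g$. The only difference is presentational — you prove $\omega = h$ by two explicit inequalities where the paper argues by contradiction, and you are somewhat more explicit in matching the pieces to Definition~\ref{def:levin} — but the substance is identical.
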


\begin{proof}
Given an instance of {\sc Max Sparsity}, we determine $\tilde{\mathbf{b}}_k$ as in \eqref{btilde} for $k = 1, 2, \ldots, K$. Then we form the following instance of {\sc Min HittingSet}:
\begin{align*}
\mathcal{U} &= \{1, 2, \ldots, N\}, \\
\mathscr{C} &= \{ supp(\tilde{\mathbf{b}}_1), supp(\tilde{\mathbf{b}}_2), \ldots, supp(\tilde{\mathbf{b}}_K) \}.
\end{align*}
Let $\mathcal{H}$ be a solution to the above instance. Then $\mathbf{B}(\mathcal{H})$ has no zero rows. By Lemma~\ref{le:index_set}, there exists
a vector $\mathbf{x}^* \in \mathcal{I}$ over $GF(q)$ with $supp(\mathbf{x}^*) \subseteq \mathcal{H}$. Such a vector $\mathbf{x}^*$ can be found by the SA algorithm in polynomial time.

We claim that there does not exist $\mathbf{x}' \in \mathcal{I}$ with Hamming weight $|supp(\mathbf{x}')| < |\mathcal{H}|$, and thus $|supp(\mathbf{x}^*)|$ must equal $|\mathcal{H}|$. Suppose there exists such a vector $\mathbf{x}'$. Lemma~\ref{le:index_set} implies that $\mathbf{B}(supp(\mathbf{x}'))$ has no zero rows, which in turn implies that $supp(\mathbf{x}') \cap supp(\tilde{\mathbf{b}}_k) \neq \emptyset$ for all $k$'s. Then $supp(\mathbf{x}')$ would be a hitting set with cardinality strictly less than $|\mathcal{H}|$. A contradiction.

The proof is completed by matching the relevant entities and procedures with those in Definition~\ref{def:levin}. Note that the transformation of a given instance of {\sc Max Sparsity} to an instance of {\sc Min HittingSet} in essence corresponds to the mapping $f$. A solution to an instance of {\sc Min HittingSet}, $\mathcal{H}$, corresponds to $y'$. Obtaining $\bx^*$ from $\mathcal{H}$ by the SA algorithm corresponds to the mapping $g$.
\end{proof}



\section{Network Coding Algorithms} \label{sec:IVGA}

In this section, we present algorithms that generate sparse innovative encoding vectors for $q \geq K$. While for the binary case (i.e., $q = 2$), finding an innovative encoding vector may not always be possible, a modification of the algorithm is also proposed for handling it.

\subsection{The Optimal Hitting Method}


For $q \geq K$, we generate a sparest innovative vector in two steps. First we find an index set ${\cal N}$ with minimum cardinality, which determines the support of the innovative encoding vector.  This is accomplished by solving the hitting set problem. Once ${\cal N}$ is found, the non-zero entries in the vector can be obtained by the SA algorithm.

The hitting set problem can be solved exactly by binary integer programming (BIP), formulated as follows:
$$
\omega = \min_{\mathbf{y}} y_1+y_2+\ldots+y_N, \label{thm:BIP}
$$
subject to
\begin{align*}
\mathbf{B} \mathbf{y} &\geq \bone,
\end{align*}
where $$\mathbf{B} = \begin{bmatrix}\tilde{\mathbf{b}}_{1}\\\tilde{\mathbf{b}}_{2}\\ \vdots \\\tilde{\mathbf{b}}_{K}
\end{bmatrix}$$
is a $K\times N$ binary matrix,  $\mathbf{y} = (y_1,y_2, \ldots, y_N)$ is an $N$-dimensional binary vector, the vector $\bone$ is the $K$-dimensional all-one vector, and the inequality sign is applied component-wise.


To solve the above problem, we can apply any algorithm for solving BIP in general, for example the cutting plane method. We refer the readers to~\cite{BIP} for more details on BIP.

\begin{myexam}
Let $q=3$, $K=3$ and $N=4$, and the orthogonal complements of $V_1$, $V_2$ and $V_3$ be given respectively by the row spaces of
\begin{gather*}
\mathbf{B}_1 = \begin{bmatrix} 1&2&0&1 \\ 1&1&0&0 \end{bmatrix},\; \mathbf{B}_2 = \begin{bmatrix} 0&2&1&0 \end{bmatrix}, \\
\mathbf{B}_3 = \begin{bmatrix} 0&0&1&1 \\ 1&0&0&2 \end{bmatrix}.
\end{gather*}
The vectors $\tilde{\mathbf{b}}_k$, for $k=1,2,3$, are
\begin{gather*}
\tilde{\mathbf{b}}_1 = [ 1\ 1\ 0\ 1 ],\; \tilde{\mathbf{b}}_2 = [ 0\ 1\ 1\ 0 ],\; \tilde{\mathbf{b}}_3 = [ 1\ 0\ 1\ 1 ].
\end{gather*}
The corresponding instance of {\sc Min HittingSet} is:
\begin{align*}
\mathcal{U} = \{1,2,3,4\},\; \mathscr{C} = \{ \{1,2,4\}, \{2,3\}, \{1,3,4\} \}.
\end{align*}
The solution to both {\sc Max Sparsity} and {\sc Min HittingSet} can be obtained by solving the following BIP:
\[
 \min y_1 + y_2 + y_3 + y_4,
\]
subject to
\begin{gather*}
y_1+y_2+y_4 \geq 1, \; y_2+y_3 \geq 1, \; y_1+y_3+y_4 \geq 1, \\
y_1, y_2, y_3, y_4 \in \{0,1\}.
\end{gather*}

One optimal solution is $y_1 = y_2 = 1$ and $y_3 = y_4 = 0$. That means, the sparsity number, $\omega$, is equal to two and ${\cal N} = \{1,2\}$. Furthermore, according to Lemma~\ref{le:index_set}, a 2-sparse innovative encoding vector can be found, for example, by the SA algorithm. \hfill $\Box$
\end{myexam}


We call the above procedure for generating an innovative vector with minimum Hamming weight the {\em Optimal Hitting (OH) method}. We summarize the algorithm as follows:

\begin{flushleft}
{\bf The Optimal Hitting method (OH):}
\end{flushleft}

\noindent {\bf Input:} For $k=1,2,\ldots , K$,  full-rank $r_k\times N$ matrix $\mathbf{C}_k$ over $GF(q)$, where $q \geq K$ and $0\leq r_k < N$.

\noindent {\bf Output:} $\mathbf{x} = (x_1, x_2, \ldots, x_N) \in \mathcal{I}$ with minimum Hamming weight.

\smallskip

\noindent {\bf Step 0:}  Initialize $\mathbf{x}$ as the zero vector.

\noindent {\bf Step 1:} For  $k=1,2,\ldots, K$, obtain a basis of the null space of $\mathbf{C}_k$. Let $\mathbf{B}_k$ be the $(N-r_k)\times N$ matrix over $GF(q)$ whose $j$-th row is the $j$-th vector in the basis.

\noindent {\bf Step 2:} For  $k=1,2,\ldots, K$,
let $\tilde{\mathbf{b}}_k$ be the component-wise logical-OR operations to the $N-r_k$ row vectors of $\mathbf{B}_k$. (Each non-zero component of $\mathbf{B}_k$ is regarded as ``1'' when taking the logical-OR operation.)

\noindent  {\bf Step 3:} Solve the corresponding {\sc Min HittingSet} as shown in Theorem~\ref{th:levin_reduction} and return $\mathcal{H}$.


\noindent {\bf Step 4:} For $k=1,2,\ldots, K$, choose a row vector from $\mathbf{B}_k$, say $\hat{\mathbf{b}}_k^T$,  such that $supp(\hat{\mathbf{b}}_k)\cap \mathcal{H} \neq \emptyset$.

\noindent  {\bf Step 5:} Determine $\mathbf{x}(\mathcal{H})$ such that $\mathbf{x}(\mathcal{H}) \cdot \hat{\mathbf{b}}_k(\mathcal{H}) \neq \bzero$ for $k = 1, 2, \ldots, K$, by the SA algorithm.

%
%

\medskip

\noindent {\bf Example 4 (continued).}
We solve the hitting set problem in Step 3 and obtain $\mathbf{y} = (1,1,0,0)$. Hence, $\mathcal{N} =\{1,2\}$. In Step~4, we choose
$$\hat{\mathbf{b}}_1 = (1,2,0,1), \hat{\mathbf{b}}_2 = (0,2,1,0) \text{ and }\hat{\mathbf{b}}_3 = (1,0,0,2).$$
In Step~5, we obtain $\mathcal{S}_{1} = \{1,3\}$ and $\mathcal{S}_{2} = \{1,2\}$. Note that both $|\mathcal{S}_{1}|$ and $|\mathcal{S}_{2}|$ are not equal to~3. We next set  $x_1 = 1$, and choose $x_2$ such that
\begin{align*}
\hat{\mathbf{b}}_{1} \cdot (1, x_2, 0, 0) &\neq 0 \\
\hat{\mathbf{b}}_{2} \cdot (1, x_2, 0, 0) &\neq 0 .
\end{align*}
We can choose $x_2 = 2$ to satisfy these two inequalities simultaneously. The vector $\mathbf{x} = (1,2,0,0)$ is an innovative encoding vector of minimum Hamming weight. \hfill $\Box$

\subsection{The Greedy Hitting Method}

Step~3 in the OH method requires solving an NP-hard problem. Therefore, some computationally efficient heuristics should be considered in practice. It is well known that {\sc Min HittingSet} can be solved approximately by the following greedy approach~\cite{Setcover}:
\begin{itemize}
\item Repeat until all sets of $\mathscr{C}$ are hit:
    \begin{itemize}
    \item Pick the element that hits the largest number of sets that have not been hit yet.
    \end{itemize}
\end{itemize}
In Step 3 of the OH method, the above greedy algorithm can be used to find approximate solutions. We call this modification the {\em Greedy Hitting} (GH) method.

\begin{myth}
The GH method is an $H_N$ factor approximation algorithm for {\sc Max Sparsity}, where $H_\ell$ is the $\ell$-th harmonic number, defined as $H(\ell) \triangleq \sum_{k=1}^\ell \frac{1}{k}$.
\end{myth}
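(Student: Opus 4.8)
The plan is to reduce the claim to the classical approximation guarantee for the greedy algorithm for {\sc Min HittingSet} (equivalently {\sc SetCover}), invoking the exact correspondence established in Theorem~\ref{th:levin_reduction}. First I would recall that, by the construction in Step~1--Step~2 of the OH/GH method and by Lemma~\ref{le:index_set}, the sparsity number $\omega$ (the minimum Hamming weight of a vector in $\mathcal{I}$) is \emph{exactly} the minimum cardinality of a hitting set for the instance $\mathscr{C} = \{supp(\tilde{\mathbf{b}}_1), \ldots, supp(\tilde{\mathbf{b}}_K)\}$ over $\mathcal{U} = \{1,\ldots,N\}$; this is precisely the content proved inside Theorem~\ref{th:levin_reduction}. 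Hence any hitting set $\mathcal{H}$ returned by the greedy procedure in Step~3 yields, via the SA algorithm in Step~5, an innovative vector $\mathbf{x}$ with $|supp(\mathbf{x})| = |\mathcal{H}|$ (since $q \geq K$, the SA algorithm succeeds and produces a vector whose support is \emph{all} of $\mathcal{H}$). So it suffices to show $|\mathcal{H}| \le H_N \cdot \omega$.

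Next I would dualize to {\sc SetCover}: a hitting set for the collection $\mathscr{C}$ over the ground set $\mathcal{U} = \{1,\ldots,N\}$ is the same thing as a set cover for the ground set $\{1,\ldots,K\}$ (one ``element'' per row $k$) using the $N$ sets $T_j \triangleq \{ k : j \in supp(\tilde{\mathbf{b}}_k)\}$ for $j = 1,\ldots,N$ — picking column index $j$ into the hitting set corresponds to selecting set $T_j$ into the cover, and ``hitting every $supp(\tilde{\mathbf{b}}_k)$'' is exactly ``covering every row $k$''. Moreover the greedy rule stated for GH (``pick the element that hits the largest number of not-yet-hit sets'') translates verbatim into the standard greedy {\sc SetCover} rule (``pick the set covering the largest number of not-yet-covered elements''). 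The universe being covered has size $K \le N$.

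Then I would invoke the classical analysis of greedy {\sc SetCover}: the greedy algorithm returns a cover of size at most $H_s \cdot \mathrm{OPT}$, where $s$ is the size of the largest set in the instance and $\mathrm{OPT}$ is the optimal cover size. Here $\mathrm{OPT} = \omega$ by the correspondence above, and each $T_j$ has $|T_j| \le K \le N$, so $H_s \le H_N$ by monotonicity of harmonic numbers. This gives $|\mathcal{H}| \le H_N \cdot \omega$, hence the GH method outputs an innovative vector of Hamming weight at most $H_N \cdot \omega$, i.e.\ it is an $H_N$-factor approximation for {\sc Max Sparsity}.

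The only genuine subtlety — and the step I would be most careful about — is confirming that the greedy hitting set produced in Step~3 really does transfer to an innovative vector of the \emph{same} weight, with no loss: one must check that the SA algorithm, given the index set $\mathcal{H}$, always yields $\mathbf{x}$ with $supp(\mathbf{x}) = \mathcal{H}$ exactly (not a proper subset that might still be ``good enough'' but is irrelevant here, and not a failure), which holds because $\mathbf{B}(\mathcal{H})$ has no zero rows (as $\mathcal{H}$ is a hitting set) and $q \ge K$, so Lemma~\ref{le:index_set} applies; and one must note that the optimum $\omega$ on the right-hand side is the true sparsity number, which again is the optimal hitting-set size by Theorem~\ref{th:levin_reduction}. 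Everything else is the textbook harmonic-number bound, for which I would simply cite \cite{Setcover} rather than reproving the charging argument. If a self-contained argument were wanted, I would include the one-paragraph proof: order the greedy choices, charge cost $1/(\text{number of newly covered elements})$ to each covered element, and observe that within any single optimal set the elements get charged $\le 1/|S|, 1/(|S|-1), \ldots, 1$ in the order they are covered, summing to at most $H_{|S|} \le H_N$ per optimal set, hence total cost $\le H_N \cdot \omega$.
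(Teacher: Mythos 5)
Your proof follows the same route as the paper's: reduce {\sc Max Sparsity} to {\sc Min HittingSet} via Theorem~\ref{th:levin_reduction} (so that the sparsity number equals the minimum hitting-set cardinality and the SA step transfers the greedy hitting set to an innovative vector of no greater weight), then invoke the classical harmonic-factor guarantee for the greedy set-cover/hitting-set heuristic; you merely spell out the dualization and the charging argument that the paper delegates to a citation. One caution: your assertion ``$K \le N$'' is not guaranteed by the problem formulation (nothing forces the number of users to be at most the number of packets), so the bound your dualization actually yields is $H_{\max_j |T_j|} \le H_K$ rather than $H_N$ when $K > N$ --- but this imprecision is inherited from the paper itself, whose proof likewise asserts an $H_{|\mathcal{U}|} = H_N$ factor for greedy {\sc Min HittingSet}.
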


\begin{proof}
It is well known that the hitting set problem is just a reformulation of the set covering problem. Therefore, the greedy algorithm is an $H_{|\mathcal{U}|}$ factor approximation algorithm for {\sc Min HittingSet}, as well as for the set covering problem \cite{vazirani}. As shown in Theorem~\ref{th:levin_reduction}, {\sc Max Sparsity} can be reduced to {\sc Min HittingSet}, and the sparsity number is equal to the cardinality of the minimum hitting set. Hence, GH is also an $H_N$ factor approximation algorithm for {\sc Max Sparsity}.
\end{proof}




Now we analyze the computational complexity of the proposed OH and GH methods. For the OH method, the computation of each $\mathbf{B}_k$ can be reduced to the computation of the RREF of~$\mathbf{C}_k$, which takes $O(N^3)$ arithmetic operations. However, if the encoding vectors are $\omega$-sparse, we can adopt the dual-basis approach in obtaining $\mathbf{B}_k$ as in Appendix~\ref{app:B}, and guarantee that each $\mathbf{B}_k$ can be obtained in $O(\omega N^2)$ times. The computational complexity of Step 1 is thus $O(\omega K  N^2)$. Step~2 involves $O(KN^2)$ operations. In step 3, the {\sc Min HittingSet} problem shown in Theorem~\ref{th:levin_reduction} has a complexity of $O(1.23801^{(N+K)})$\cite{Shianexact2011}. Step~4 requires $O(K)$ operations. Step~5 involves the SA algorithm, which has a complexity of $O(K |\mathcal{H}|)$. Since $|\mathcal{H}| \leq N$, the overall complexity of OH is $O(\omega KN^2+1.23801^{(N+K)}) = O(1.23801^{(N+K)})$. The only difference between OH and GH is that GH uses a greedy algorithm to approximate the {\sc Min HittingSet} problem in Step 3. The greedy algorithm takes $O(KN^2)$ operations. Therefore, the overall complexity of GH is $O(\omega K N^2)$.



\subsection{Solving Binary Equation Set for $q=2$}

The last step of the GH method involves solving a set of linear inequalities over $GF(q)$. However when $q=2$, solving a linear inequality of the form $f(\mathbf{x}) \neq 0$ is equivalent to solving the linear equation $f(\mathbf{x}) = 1$. Based on this fact, we now propose a procedure which is called {\em Solving Binary Equation Set} (SBES), which modifies the SA algorithm so that it is applicable to the case where $q=2$. Note that the same idea can be applied to cases where $q$ is a prime power satisfying $2 < q < K$.

The heuristic is as follows. We want to find $\mathbf{x}$ such that $\mathbf{A}\mathbf{x} = \bone$, where $\mathbf{A}$ is the coefficient matrix of the system of linear equations. The system may be inconsistent and has no solution. Nevertheless, we can disregard some equations and guarantee that at least $rank(\mathbf{A})$ equations are satisfied.

\medskip

\begin{flushleft}
{\bf Solving Binary Equation Set Procedure (SBES):}
\end{flushleft}
\smallskip

\noindent {\bf Input:} A $K \times N$ matrix $\hat{\mathbf{B}}$ over $GF(2)$ and $\mathcal{N}$, where  $\mathcal{N} \subseteq \{1,2,\ldots,N\}$.

\noindent {\bf Output:} $\mathbf{x} = (x_1, x_2, \ldots, x_N) \in GF(2)^{N}$ with support in $\mathcal{N}$.

\smallskip

\noindent {\bf Step 0:} Let $\mathbf{z}=(z_1, z_2, \ldots, z_{|\mathcal{N}|})$ be the zero vector.


\noindent {\bf Step 1:} Delete columns of $\hat{\mathbf{B}}$ whose column indices are not in $\mathcal{N}$. Augment the resulting matrix by adding a $K$-dimensional all-one column vector to the right-hand side. Let the resulting matrix be denoted by~$\mathbf{Q}$.

\noindent {\bf Step 2:} Compute the row echelon form (REF) of $\mathbf{Q}$ and call it  $\mathbf{Q}'$.

\noindent {\bf Step 3:} Delete all zero rows in $\mathbf{Q}'$ and any row in $\mathbf{Q}'$ if it has a single ``1'' in  the $(|\mathcal{N}|+1)$-th entry. The resulting matrix is called $\mathbf{Q}''$. Let the number of pivots in $\mathbf{Q}''$ be $\nu$, and let
$p_1$,  $p_2,\ldots p_\nu$ be the column indices of the pivot in $\mathbf{Q}''$  listed in ascending order.

\noindent {\bf Step 4:} Execute elementary row operations in $\mathbf{Q}''$ so that $\mathbf{Q}''$ is transformed into its row-reduced echelon form.

\noindent {\bf Step 5:} Set the variables associated with the non-pivot columns to zero.
For $i=1,2,\ldots, \nu$,  assign $z_{p_i}$ the value of the $i$-th entry of the last column in $\mathbf{Q}''$.


\noindent {\bf Step 6:} Assign values to the components of $\mathbf{x}$ such that $\mathbf{x}(\mathcal{N}) = \mathbf{z}$, and $x_i=0$ if $i\not\in\mathcal{N}$.

\medskip

When applying the GH method to the case where $q=2$, we replace the SA algorithm in Step~5 of the GH method by the SBES procedure. We call this modification GH with SBES.


\begin{myexam}
Consider $q=2$, $K=4$, $N=5$, $\mathcal{H}=\{1,3\}$ and
\begin{equation}
\hat{\mathbf{B}} = \begin{bmatrix} \hat{\mathbf{b}}_1^T \\ \hat{\mathbf{b}}_2^T \\ \hat{\mathbf{b}}_3^T \\ \hat{\mathbf{b}}_4^T \end{bmatrix} = \begin{bmatrix} 1&1&0&1&0 \\ 1&1&1&0&1 \\ 1&0&0&1&1 \\ 0&0&1&0&0 \end{bmatrix}.
\end{equation}
We extract the first and third rows of $\hat{\mathbf{B}}$
and augment it by the all-one column vector, $\bone$,
\begin{align*}
\mathbf{Q} &= \begin{bmatrix} 1&0&1\\ 1&1&1\\ 1&0&1\\ 0&1&1\\ \end{bmatrix}.
\end{align*}
In Step 2, we compute the REF of $\mathbf{Q}$
\begin{align*}
\mathbf{Q}' &= \begin{bmatrix} 1&0&1\\ 0&1&0\\ 0&0&1\\ 0&0&0\\ \end{bmatrix}.
\end{align*}
The last row is an all-zero row, representing a redundant equation. The second last row contains a one in the third component, and zero elsewhere, implying that the original system of linear equations cannot be solved. In order to get a heuristic solution, we relax the system by deleting that row, and obtain
\begin{align*}
\mathbf{Q}'' &= \begin{bmatrix} 1&0&1\\ 0&1&0\\  \end{bmatrix}.
\end{align*}
We have $p_1=1$ and $p_2=2$ in Step 3. The matrix $\mathbf{Q}''$ is already in its RREF. By Step 5, $z_{1} = q''(1,3)=1$ and $z_{2} = q''(2,3)=0$. Finally, we set $x_{1} = z_{1} = 1$, and $x_{3}= z_{2}= 0$. As a result, $\mathbf{x}$ equals $(1,0,0,0,0)$ and is innovative to all except the last user. \hfill $\Box$
\end{myexam}

In the above example, an innovative vector does exist, but GH with SBES fails to find it. The main reason is that the hitting set subproblem is formulated for the case where $q \geq K$. When $q$ is small, there is no guarantee that a non-empty $\mathcal{I}$ must consist of a vector with support restricted in $\mathcal{H}$. Indeed, we will skip the greedy hitting procedure and simply let $\mathcal{H}$ be the index set of all packets, i.e., $\{1, 2, \ldots, N\}$, then it is easy to check that with the same input, the SBES procedure returns $\mathbf{x}=(0,0,1,1,0)$, which is an innovative vector to all users. We call this modification {\em Full Hitting} (FH) with SBES, for the reason that the hitting set is chosen as the full index set of the packets. In general, FH with SBES produces encoding vectors that are innovative to more users than GH with SBES, at the expense of higher Hamming weights. Note that the encoding vectors generated by GH with the SBES procedure may not be innovative when $K > q = 2$. They are still innovative, however, to a fraction of the $K$ users.

Now we analyze the computational complexity of FH with SBES and GH with SBES. The SBES procedure is indeed the Gauss-Jordan elimination. So the complexity of SBES is $O(NK^2)$. The only difference between GH method and GH with SBES is that GH with SBES uses SBES method in the last step, instead of the SA algorithm. Therefore, the total complexity of GH with SBES is $O(\omega K N^2 + NK^2)$. For FH with SBES, compared with GH with SBES, it skips the greedy hitting procedure which has a complexity of $O(KN^2)$. Since $\omega \geq 1$, the overall complexity of FH with SBES is also $O(\omega K N^2 + NK^2)$.

\section{Benchmarks for Comparison} \label{sec:benchmark}

In this section, we describe four existing linear broadcast codes, which will be used for comparison. They are LT code\cite{LT}, Chunked Code\cite{lithreeschemes2012}, Random Linear Network Code (RLNC), and Instantly Decodeable Network Code (IDNC)\cite{Sorourcompletion2012}. Their encoding and decoding complexities will also be presented.

\subsection{Code Descriptions}

When LT code is used, the base station repeatedly broadcasts encoded packets to all users until they have decoded all the source packets. The encoding vectors are generated as follows. The base station first randomly picks a degree value, $d$, according to the \emph{Robust Soliton} distribution (see Definition 11, \cite{LT}). It then selects $d$ source packets uniformly at random and generates an encoded packet by adding them over $GF(2)$. A belief propagation decoder is used for decoding. For more details about the LT code, we refer the readers to \cite{LT}.

RLNC can be used with two transmission phase. In the systematic phase, the base station broadcasts each source packet once. In the retransmission phase, it repeatedly broadcasts encoded packets generated over $GF(q)$ to all users until they have decoded all the source packets. Gaussian elimination is used for decoding. A user sends an acknowledgement to the base station after it has decoded all the source packets.

Chunked Code is an extension of RLNC. It divides the broadcast packets into disjoint chunks with $C$ packets per chunk. For simplicity, we assume $C$ divides $N$. At each time slot, the base station first picks a chunk uniformly at random. Then, it generates an encoded packet by linearly combining the packets in the selected chunk with coefficients drawn from $GF(q)$. As a result, all the encoding vectors are $C$-sparse. A user can decode a chunk as long as it has received $C$ linearly independent encoded packets which are generated from that chunk. Gaussian elimination is used for decoding. A user sends an acknowledgement to the base station after it has decoded all the source packets.

IDNC has many different variations. We consider the Maximum Weight Vertex Search (MWVS) algorithm proposed in \cite{Sorourcompletion2012}. The broadcast is divided into two phases, the systematic phase and the retransmission phase. In the systematic phase, each source packet is broadcast to all users once. At each time slot of the retransmission phase, the base station generates an encoded packet using MWVS, and broadcasts it to all users. Each user sends a feedback to the base station after receiving the packet. Then the base station generates a new encoded packet based on the updated feedback. Such a procedure is repeated until all users have decoded all the source packets. The encoded packets are generated as follows. After the systematic phase, if the $i$-th user has not received the $j$-th source packet, a vertex $v_{ij}$ will be defined. Two vertices, say $v_{ij}$ and $v_{kl}$, are connected if
\begin{enumerate}
\item $i\neq k$ and $j=l$; or
\item $i\neq k$, $j\neq l$, the $i$-th user has the $l$-th packet and the $k$-th user has the $j$-th packet.
\end{enumerate}
By doing so, a graph $G = (\mathcal{V}, \mathcal{E})$, called the IDNC graph, is constructed. Its adjacency matrix is defined in the usual way, that is, if vertex $v_{ij}$ is connected to vertex $v_{kl}$, the element $a_{ij,kl}$ will be set to 1, otherwise 0. Then, define the vertex weight $w_{ij}$ for each vertex $v_{ij}$ as:
\begin{equation} \label{eq:vertex_weight}
w_{ij} \triangleq \frac{\tau_{i}}{1-p_i} \big(\sum_{k,l : v_{kl} \in \mathcal{V}} \frac{\tau_{k}}{1-p_k}a_{ij,kl} \big),
\end{equation}
where $p_i$ is the erasure probability\footnote{In calculating the vertex weight, it was assumed that the transmission from the base station to user~$i$ experiences independent erasures in different time slots, each with probability $p_i$} of the channel from the base station to user $i$, and $\tau_{i}$ is the number of source packets not yet received by the $i$-th user. The vertex with the maximal weight will then be determined and added to an initially empty set, $\mathcal{K}$. The following iterative procedure will then be repeatedly performed:
\begin{enumerate}
\item Let $\mathcal{V}' \triangleq \{ v \in \mathcal{V} \setminus \mathcal{K} : (v, u) \in \mathcal{E} \; \forall u \in \mathcal{K}\}$.
\item Let $G'$ be the subgraph induced by $\mathcal{V}'$.
\item Recompute the vertex weights using~\eqref{eq:vertex_weight} with respect to the graph $G'$.
\item Find the vertex in $G'$ with maximal weight and adds it into $\mathcal{K}$.
\end{enumerate}
The above procedure is repeated until no more vertices in $G$ can be added to $\mathcal{K}$, which occurs when $\mathcal{V}'$ in Step~1 is found to be empty. Note that the vertices in $\mathcal{K}$ form a clique in $G$. An encoded packet is obtained by XORing all the packets in $\{P_j : v_{ij} \in \mathcal{K} \text{ for some } i \}$.


\subsection{Complexity Analysis}

We first consider the encoding complexity. For LT code, it first generates the degree value $d$ according to the Robust Soliton distribution. This can be done by generating a random variable uniformly distributed over $(0,1)$ and checking which interval it falls within, out of the $N$ intervals defined by the Robust Soliton distribution. Encoding is completed by randomly picking $d$ source packets and adding them together over GF(2). The overall complexity is $O(N)$. For RLNC, the encoding complexity is also $O(N)$, since it requires the generations of $N$ random coefficients, $N$ multiplications and $N-1$ additions, all over GF($q$). Chunked Code is a special case of RLNC, which applies RLNC to each chunk. Since each chunk has only $C$ packets, the encoding complexity is $O(C)$. For MWVS, the IDNC graph can have at most $KN$ vertices. To construct the graph, it takes $KN(KN-1)$ operations to check the connectivity for each pair of vertices. For each iteration, it takes $O(KN)$ operations to compute the vertex weight for each vertex. As aforementioned, the IDNC graph can have at most $KN$ vertices, therefore the complexity of each iteration is $O(K^2N^2)$. Since the vertex set $\mathcal{K}$ can have at most $K$ vertices, the algorithm can run at most $K$ iterations. The overall encoding complexity of MWVS is $O(K^3 N^2)$. For more details about the algorithm, we refer the readers to \cite{Sorourcompletion2012}. As analyzed in previous sections, the encoding complexity of GH is $O(\omega K N^2)$. Since it is known that $\omega \leq K$, the encoding complexity of GH is lower than that of MWVS.

Now we are going to analyze the decoding complexity. For LT code, we use belief propagation decoder for decoding. The decoding process and iterative and starts from packets with degree~1. Note that the received packets and their corresponding encoding vectors will be updated during the iterative process. In each iteration, the decoder selects a successfully received packet $Y_i$ whose current encoding vector has Hamming weight~1. In other words, $Y_i$ is equal to $P_k$ for some $k$. Then it subtracts $P_k$ (over GF(2)) from each of the received packets whose $k$-th component of the encoding vector is equal to 1.
The encoding vectors of these packets are then updated accordingly. Such a procedure is repeated until all the source packets are decoded. In the worst case, only one source packet is decoded in each iteration. There will be $N-1$ iterations and the $i$-th iteration requires $N-i$ operations. Totally, $\frac{N(N-1)}{2}$ operations are needed. Therefore, the decoding complexity of LT code is $O(N^2)$. For RLNC, we assume that Gaussian elimination is used for decoding. The decoding complexity is therefore $O(N^3)$. For the Chunked Code, the decoding complexity to solve each chunk is $O(C^3)$. Totally, there are $\lceil N/C \rceil$ chunks. Therefore, its decoding complexity is $O(C^2N)$. For GH, OH, GH with SBES, FH with SBES, since all the encoding vectors are guaranteed to be $K-$sparse, decoding algorithms for solving sparse linear system can be used~\cite{sungonthesparsity2011}. The decoding complexity is  $O(\min\{K,N\} N^2)$. For IDNC, since $\mathcal{K}$ has at most $K$ vertices, the encoding vector of an encoded packet cannot have Hamming weight greater than $\min\{K,N\}$. Therefore, the decoder needs at most $\min\{K,N\}-1$ XOR operations to decode a source packet. Since there are totally $N$ source packets, its decoding complexity is $O(\min\{K,N\} N)$.

The encoding and decoding complexities of these schemes are summarized in Table~\ref{ta:complexity}. Since $\omega$ in general grows with $K$, in the table, we replace all $\omega$ by $K$, as we know that $\omega$ is always bounded above by $K$.


\begin{table}
\centering
\begin{threeparttable}[b]
\caption{Comparison of Computational Complexity}
\label{ta:complexity}
\renewcommand{\arraystretch}{1.3}
\begin{tabular}{|p{1.8cm}||p{3cm}|p{2.3cm}|}
\hline
\bfseries Scheme & \bfseries Encoding & \bfseries{Decoding} \\
\hline
\bfseries LT Code & $O(N)$ & $O(N^2)$ \\
\hline
\bfseries RLNC & $O(N)$ &  $O(N^3)$ \\
\hline
\bfseries Chunked Code & $O(C)$ & $O(C^2N)$ \\
\hline
\bfseries IDNC (using MWVS) & $O(K^3N^2)$ & $O(\min\{K,N\}N)$ \\
\hline
\bfseries OH & $O(1.23801^{(N+K)})$ & $O(\min\{K,N\} N^2)$\\
\hline
\bfseries GH & $O(K^2 N^2)$ & $O(\min\{K,N\} N^2)$ \\
\hline
\bfseries GH with SBES & $O(K^2 N^2)$ & $O(\min\{K,N\} N^2)$ \\
\hline
\bfseries FH with SBES & $O(K^2 N^2)$ & $O(\min\{K,N\} N^2)$ \\
\hline
\end{tabular}
    \end{threeparttable}
\end{table}

\section{Performance Evaluation} \label{sec:PE}

\subsection{Simulation Setup}

In this section, we evaluate our proposed methods via simulations. We simulate a broadcast system in which a transmitter broadcasts $N$ equal-size packets to $K$ users via erasure broadcast channels. Packet erasures are assumed to be statistically independent across users and time slots. Each user experiences packet erasure with probability $P_e$. We consider both the case where the transmitter receives perfect feedback from the users and the case where each feedback channel has an erasure probability $P_{e_{up}}$.

The whole broadcast process is divided into two phases. In the first phase, the transmitter sends all source packets one by one without coding. In the second phase, packets are encoded and transmitted until all users received enough packets for recovering all the source packets. The number of packets that have been transmitted is equal to the maximum download delay of all users. We call it the {\em completion time} of the broadcast process.

We compare the performance of our proposed codes and the benchmarks described in the previous subsection. For the LT code,
we let $c=0.1$ and $\delta=0.1$ for the Robust Soliton distribution. For the Chunked Code, we let $C = 8$. For our proposed methods, OH, GH and FH, we use Gaussian elimination for decoding. The performance of each code is evaluated in four ways, namely, completion time, encoding time, Hamming weights of encoding vectors, and decoding time. The completion time, as defined above, is a network performance measure, which indicates how efficient the wireless spectrum is utilized. The encoding time reflects the computational complexity of each coding algorithm. It is measured by the CPU time each coding algorithm takes in generating an encoded packet in the second phase. The Hamming weights of the encoding vectors in the second phase are recorded, so that we can evaluate the sparsity of our codes. The decoding time reflects the computational complexity of each decoding algorithm. It is important to mobile devices that do not have very high computational speed. It consists of two parts. The first part is the CPU time that is used for checking whether an encoding vector is innovative to each user. The second part is the CPU time used for decoding computation.


The system configuration of the computer used for the simulation is shown below:
\begin{itemize}
\item Intel\textsuperscript{\circledR} Core\textsuperscript{TM} 2 Quad CPU Q9650 processor (3.00GHz)
\item Microsoft\textsuperscript{\circledR} Windows\textsuperscript{\circledR} 7 Enterprise (64-bit) with Service Pack 1
\item 4GB of RAM
\end{itemize}
All the simulation programs are written in C programming language.

To ensure the accuracy of the measured CPU time, we specify only one CPU core for the simulation process. The priority level of the simulation process is set to the highest possible value, so that the simulation will not be interrupted by any other processes. Then we use the \emph{QueryPerformanceCounter} command to measure the CPU time, which can achieve microsecond precision. In our simulation figures, each data point involves 3,000 random realizations.

\subsection{Simulation Results}

We first evaluate the performance over $GF(2^8)$ with various~$N$. We let $K=40$, $P_e=0.3$ and $N$ varying from 32 to 96. The uplink channels are assumed to be perfect, i.e., $P_{e_{up}}=~0$. The average completion times of all the coding schemes are shown in Fig.~\ref{fig_CompletionDelay_vs_N}. Both OH and GH are optimal, since they always generate innovative vectors for $q \geq K$. Compared with LT code and Chunked Code, when $N=96$, they can reduce their completion times by 74\% and 50\%, respectively. Since $GF(2^8)$ can be considered sufficiently large for $K=40$, encoding vectors generated by RLNC are almost always innovative. As a result, RLNC performs almost the same as OH and GH.

The simulation results of encoding time is shown in Fig.~\ref{fig_encoding_time_vs_N}. Since OH, IDNC and GH need to perform some computation based on the feedback from users, they have longer encoding times than LT code, Chunked Code and RLNC. While OH has longer encoding time than IDNC, GH is able to outperform IDNC by 21\% when $N=96$. Although both RLNC and LT have linear encoding complexity, LT works over $GF(2)$, which is much faster than RLNC in terms of CPU time. It can also be seen that the growth of encoding time with increasing $N$ for each scheme agrees with its encoding complexity analyzed in the previous section.

\begin{figure}
\centering
\includegraphics[width=3.75in]{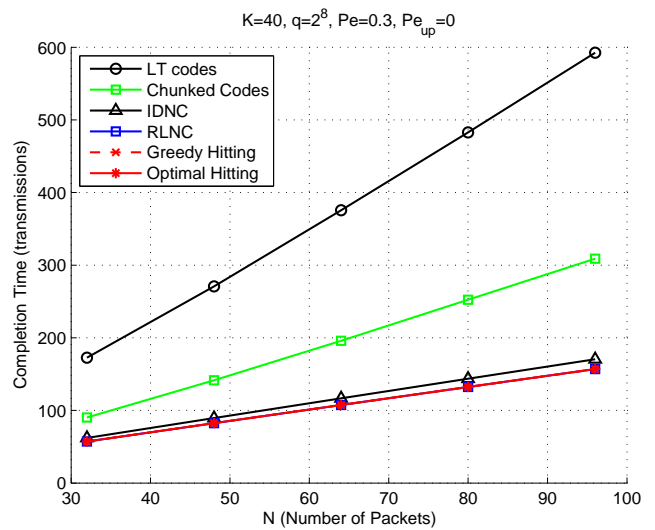}
\caption{Completion time for various $N$ with $q=2^8$, $K=40$, $P_e=0.3$  and  $P_{e_{up}}=0$.}
\label{fig_CompletionDelay_vs_N}
\end{figure}

\begin{figure}
\centering
\includegraphics[width=3.75in]{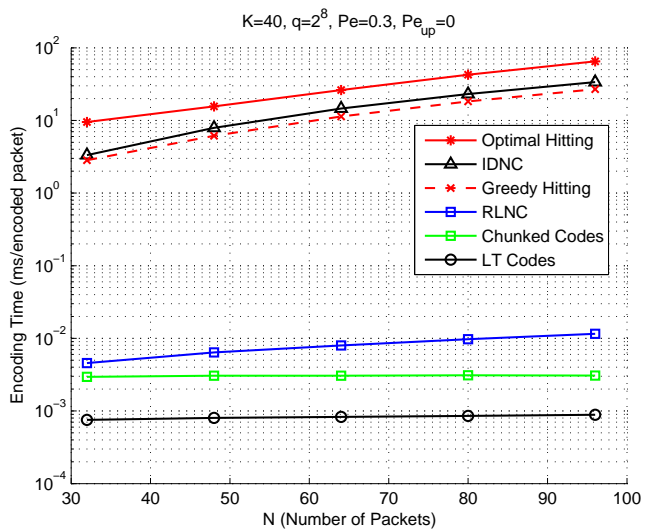}
\caption{Encoding time for various $N$ with $q=2^8$, $K=40$, $P_e=0.3$  and  $P_{e_{up}}=0$.}
\label{fig_encoding_time_vs_N}
\end{figure}

The comparisons of Hamming weight and decoding time are shown in Fig.~\ref{fig_HW_vs_N} and \ref{fig_decoding_time_vs_N}, respectively. The Hamming weight of encoding vectors generated by OH and GH is comparable to that of LT code and IDNC, while much smaller than that of RLNC. The advantage of transmitting sparse encoding vectors is that the decoding time can be reduced, which is shown in Fig.~\ref{fig_decoding_time_vs_N}. Fig.~\ref{fig_decoding_time_vs_N} shows that the decoding time of RLNC is much higher than those of the other schemes. In particular, compared with RLNC, both OH and GH can reduce decoding time by 97\% when $N=96$. Besides, the growth of decoding time with increasing $N$ for each scheme agrees with the complexity analysis in the previous section.


\begin{figure}
\centering
\includegraphics[width=3.75in]{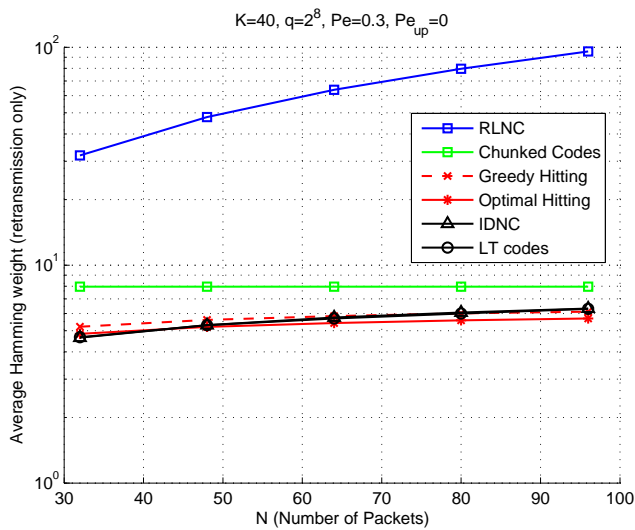}
\caption{Hamming weight for various $N$ with $q=2^8$, $K=40$, $P_e=0.3$  and  $P_{e_{up}}=0$.}
\label{fig_HW_vs_N}
\end{figure}

\begin{figure}
\centering
\includegraphics[width=3.75in]{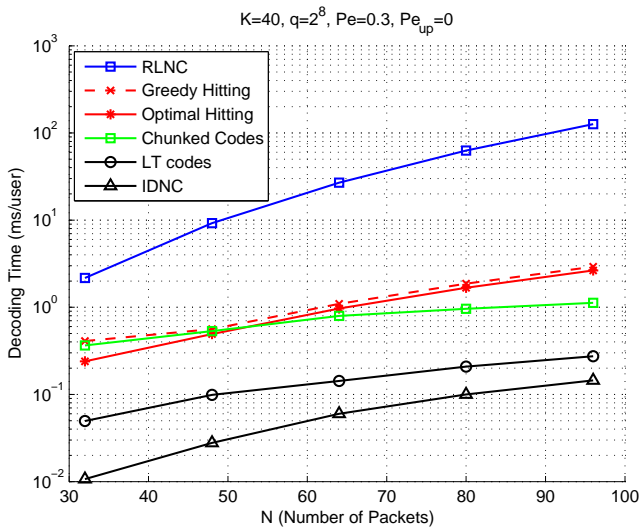}
\caption{Decoding time for various $N$ with $q=2^8$, $K=40$, $P_e=0.3$  and  $P_{e_{up}}=0$.}
\label{fig_decoding_time_vs_N}
\end{figure}

Next, we investigate the performance with various $K$. We let $N=32$, $P_e=0.3$ and $K$ varying from 5 to 200 over $GF(2^8)$. The uplink channels are also assumed to be perfect. The comparison of completion time is shown in Fig.~\ref{fig_CompletionDelay_vs_K}. OH and GH are optimal and outperforms LT code, Chunked Code and IDNC. In particular, compared with LT code, Chunked Code and IDNC, both OH and GH can reduce completion time up to 68\%, 40\% and 17\%, respectively. The gap between the performance of IDNC and OH becomes larger as $K$ increases. The reason is that IDNC works over $GF(2)$ and cannot guarantee innovative encoding vectors when $K$ is large. The comparison of encoding time is shown in Fig.~\ref{fig_encoding_time_vs_K}. It can be seen that the encoding time of IDNC grows quickly with $K$, when compared with OH and GH; the curves for RLNC, LT code and Chunked Code are flat, since their encoding methods are independent of the number of users. All these phenomena agrees with the complexity results shown in the previous section.

 The average Hamming weights of the encoding vectors generated by RLNC, LT code and Chunked Code are constant, as shown in Fig.~\ref{fig_HW_vs_K}. Since the decoding time mainly depends on the Hamming weight of received encoding vectors, the decoding times of RLNC, LT code and Chunked Code keep unchanged as $K$ increases. However, the decoding times of OH, GH and IDNC slightly increase as $K$ increases, which is shown in Fig.~\ref{fig_decoding_time_vs_K}. This agrees with the complexity results shown in the previous section.

\begin{figure}
\centering
\includegraphics[width=3.75in]{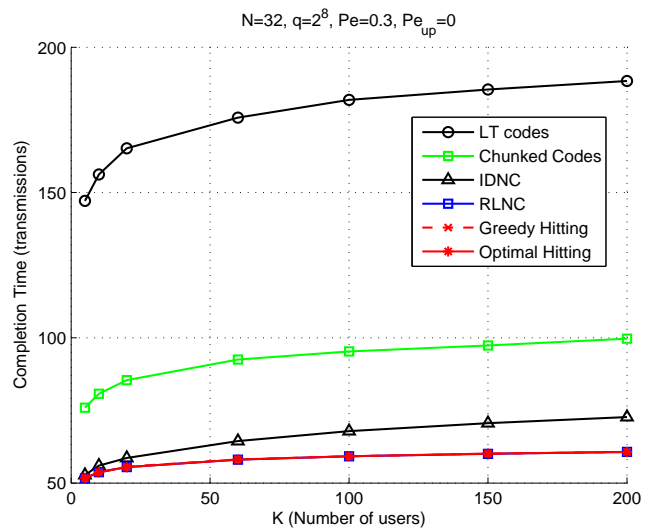}
\caption{Completion time for various $K$ with $q=2^8$, $N=32$, $P_e=0.3$  and  $P_{e_{up}}=0$.}
\label{fig_CompletionDelay_vs_K}
\end{figure}

\begin{figure}
\centering
\includegraphics[width=3.75in]{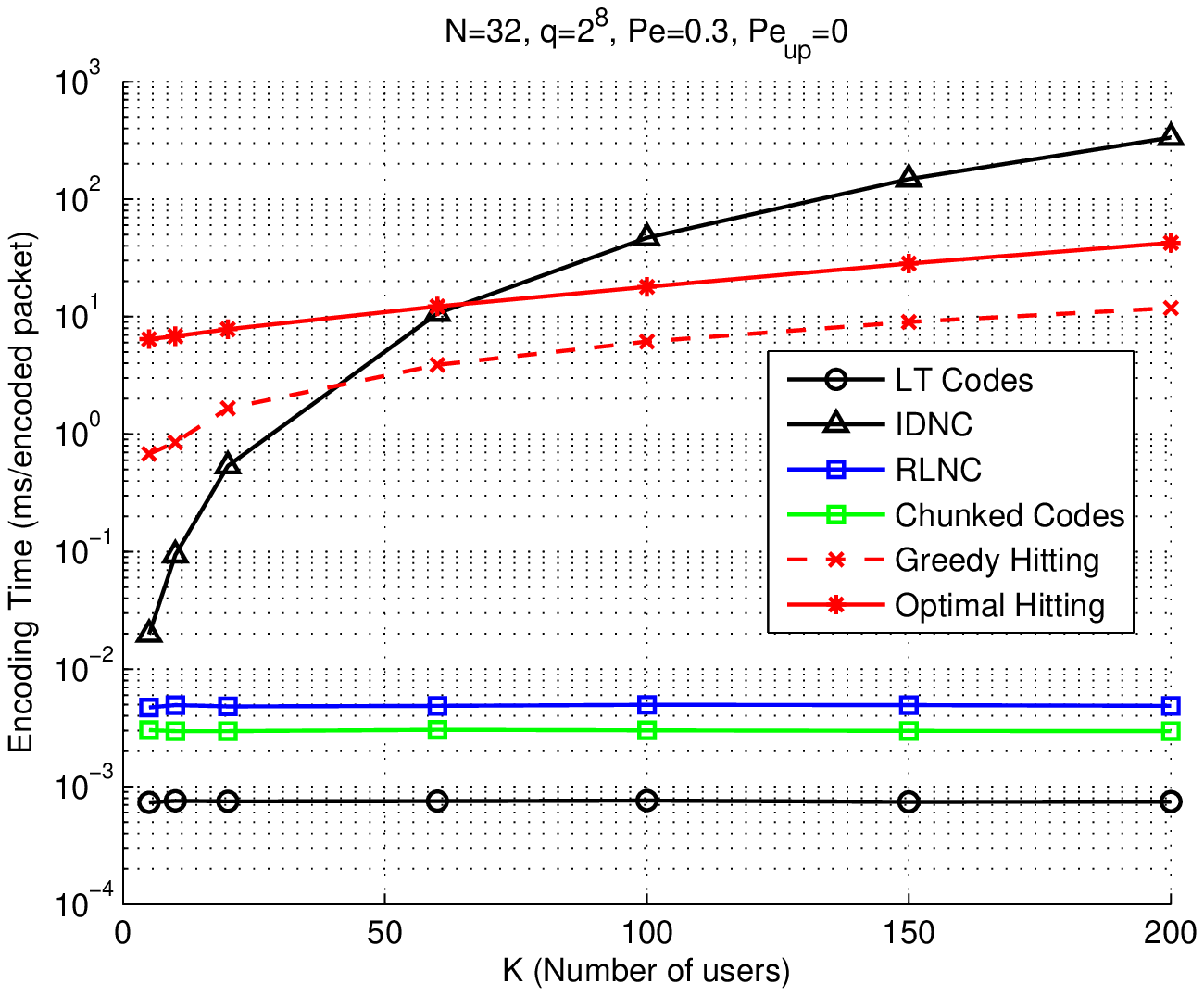}
\caption{Encoding time for various $K$ with $q=2^8$, $N=32$, $P_e=0.3$  and  $P_{e_{up}}=0$.}
\label{fig_encoding_time_vs_K}
\end{figure}

\begin{figure}
\centering
\includegraphics[width=3.75in]{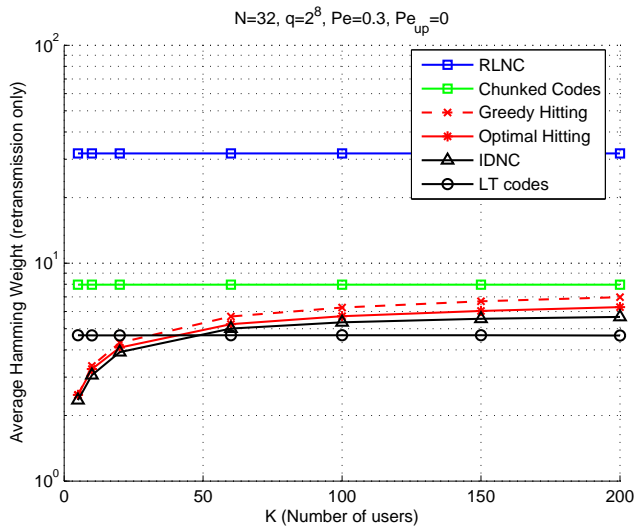}
\caption{Hamming weight for various $K$ with $q=2^8$, $N=32$, $P_e=0.3$  and  $P_{e_{up}}=0$.}
\label{fig_HW_vs_K}
\end{figure}

\begin{figure}
\centering
\includegraphics[width=3.75in]{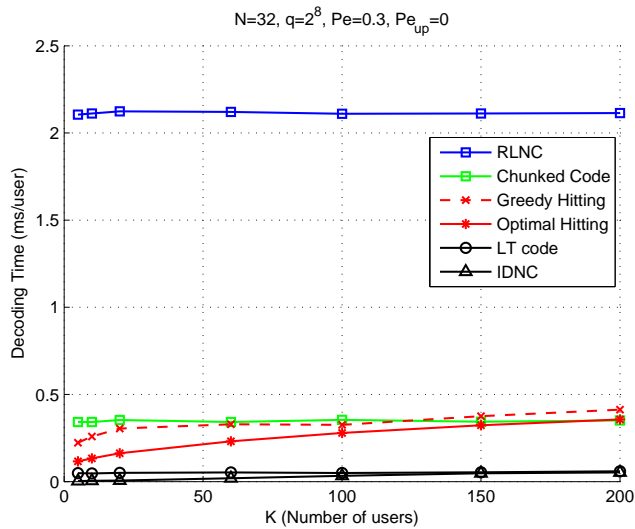}
\caption{Decoding time for various $K$ with $q=2^8$, $N=32$, $P_e=0.3$  and  $P_{e_{up}}=0$.}
\label{fig_decoding_time_vs_K}
\end{figure}

To study the performance under different fading environment, we run simulations with various erasure probability $P_e$. We set $N=32$, $K=40$ and $P_{e_{up}}=0$ over $GF(2^8)$. The erasure probability $P_e$ various from $0$ to $0.4$. The completion times of the coding schemes are shown in Fig.~\ref{fig_CompletionDelay_vs_Pe}. Since OH, GH and RLNC always transmit innovative packets, they have shorter completion times in all scenarios. The comparison of encoding times, Hamming weight and decoding times are similar to Fig.~\ref{fig_encoding_time_vs_N}, \ref{fig_HW_vs_N} and \ref{fig_decoding_time_vs_N}, respectively, and are omitted here.

\begin{figure}
\centering
\includegraphics[width=3.75in]{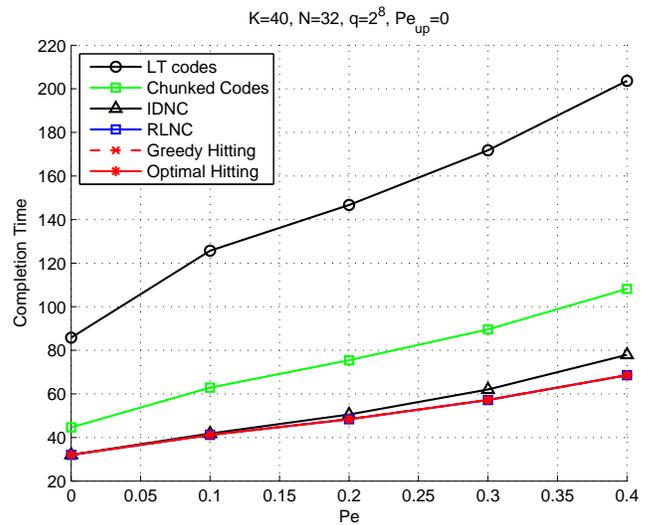}
\caption{Completion time for various $P_e$ with $q=2^8$, $N=32$, $K=40$ and  $P_{e_{up}}=0$.}
\label{fig_CompletionDelay_vs_Pe}
\end{figure}

The performance comparison of these schemes are summarized in Table~\ref{table_comp_schemes}. It is clear that there is no dominant scheme and there is a tradeoff between different performance measures. Our proposed methods, OH and GH, achieves optimal completion time like RLNC. As RLNC has very long decoding time, we design OH and GH to generate sparse encoding vectors so that the decoding time can be reduced. It is, however, at the expense of higher encoding time and user feedback.

\begin{table}
\caption{Performance Comparison of Different Coding Schemes}
\label{table_comp_schemes}
\renewcommand{\arraystretch}{1.3}
\centering
\begin{tabular}{|p{0.8cm}|p{1cm}|p{1cm}|p{1.2cm}|p{1cm}|p{0.9cm}|}
\hline
\bfseries Scheme & \bfseries Comple-tion Time & \bfseries Decoding & \bfseries Encoding & \bfseries Feedback Required?& \bfseries Field Size $q$\\
\hline\hline
LT Code & Long & Fast & Fast & No & $q=2$\\
\hline
RLNC & Optimal & Very Slow & Fast  & No & $q\gg K$\\
\hline
Chunked Code & Sub-optimal & Faster than RLNC & Fast & No & $q\gg K$\\
\hline
IDNC & Sub-optimal &Fast &Slow & Yes & $q=2$\\
\hline
OH & Optimal & Faster than RLNC and GH & Slow & Yes & $q\geq K$\\
\hline
GH & Optimal & Faster than RLNC& Slow (Faster than OH) & Yes & $q\geq K$\\
\hline
\end{tabular}
\end{table}

Since the binary field is widely used in engineering, we also study the performance of these schemes over $GF(2)$. In the binary field, innovative encoding vectors cannot be guaranteed. However, simulation shows that FH and GH still have shorter completion time. In particular, compared with LT, Chunked Code, IDNC and RLNC, they can reduce completion time by 77\%, 53\%, 15\% and 5\%, respectively, when $K=200$.

\begin{figure}
\centering
\includegraphics[width=3.75in]{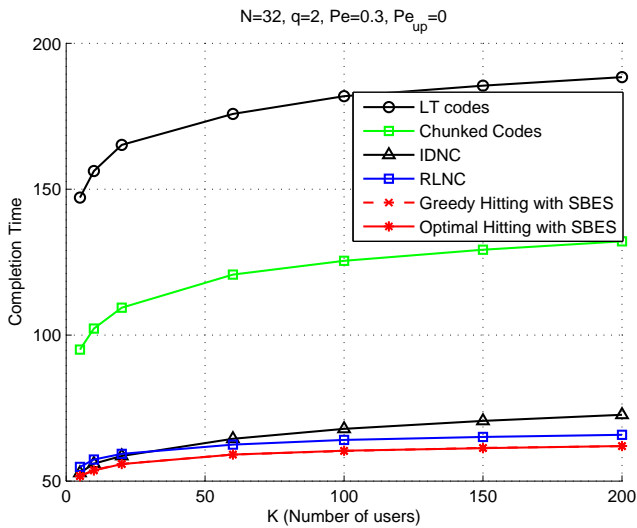}
\caption{Completion time for various $K$ with $q=2$, $N=32$, $P_e=0.3$  and  $P_{e_{up}}=0$.}
\label{fig_Binary_CompletionDelay_vs_K}
\end{figure}

As feedback channels in reality are not error-free, we also investigate the case where the feedback channels have erasures. In Fig.~\ref{fig_CompletionDelay_vs_K_peup0dot1}, we plot the completion times with $N=32$, $P_e=0.3$ and $P_{e_{up}}=0.1$ over $GF(2^8)$. It can be seen that OH and GH have longer completion times than RLNC does, but still outperforms LT code, Chunked Code and IDNC. In particular, when $K=200$, compared with RLNC, OH and GH requires 15\% longer completion time than RLNC, while compared with LT code, Chunked Code and IDNC, they can reduce their completion times by 73\%, 30\% and 20\%, respectively.

\begin{figure}
\centering
\includegraphics[width=3.75in]{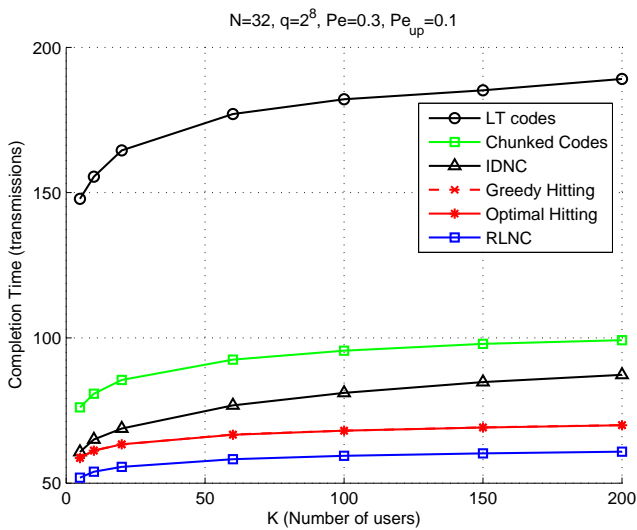}
\caption{Completion time for various $K$ with $q=2^8$, $N=32$, $P_e=0.3$ and $P_{e_{up}}=0.1$.}
\label{fig_CompletionDelay_vs_K_peup0dot1}
\end{figure}

\section{Conclusions} \label{sec:conclude}

In this paper, we adopt the computational approach to study the linear network code design problem for wireless broadcast systems. To minimize the completion time or to maximize the information rate, the concept of innovativeness plays an important role. We show that innovative linear network code is uniformly optimal in minimizing user download delay. While it is well known that innovative encoding vectors always exist when the finite field size, $q$, is greater than the number of users, $K$, we prove that the problem of determining their existence over smaller fields is NP-complete. Its corresponding maximization version is not only hard to solve, but also hard to approximate. Nevertheless, for $GF(2)$, we propose a heuristic called FH with SBES, which is numerically shown to be nearly optimal under our simulation settings.

Sparsity of a network code is another issue we have addressed. When $q \geq K$, we show that the minimum Hamming weight within the set of innovative vectors is bounded above by $K$. To find a vector that achieves the minimum weight, however, is proven to be NP-hard via a reduction from the hitting set problem. An exact algorithm based on BIP is described, and a polynomial-time approximation algorithm based on the greedy approach is constructed.

The performance of our proposed algorithms has been evaluated by simulations. When $q \geq K$, our proposed algorithm is optimal in completion time and is effective in reducing decoding time. When $q=2$, our proposed algorithm is able to strike a proper balance between completion time and decoding time. Our proposed methods, however, require longer encoding time and the availability of user feedback. We believe that there is no scheme which can dominate all other schemes in all aspects. Which broadcast codes are more suitable to use depends on the specific application scenario.

{\em Acknowledgements:} The authors would like to thank Prof. Wai Ho Mow and Dr. Kin-Kwong Leung for their stimulating discussions in the early stage of this work.

\appendices

\section{Problem instances with no Innovative Encoding Vector when $q<K$}
\label{app:A}
In this appendix we show that the condition $q\geq K$ in Theorem~\ref{nonemptyI} cannot be relaxed.

Let $U$ be the ambient space $GF(q)^N$, and consider a subspace $V$ of $U$ with dimension $N-2$.
Let $\mathbf{v}_1, \mathbf{v}_2, \ldots, \mathbf{v}_{N-2}$ be a basis of~$V$. For a given vector $\mathbf{u}$ in $U$, we let $V\oplus \langle \mathbf{u} \rangle$ denote the
the vector subspace in $U$ generated by $V$ and $\mathbf{u}$.
We claim that we can find $K=q+1$
non-zero vectors $\mathbf{u}_1, \mathbf{u}_2,\ldots, \mathbf{u}_K$ in $U$ such that for $i=1,2,\ldots, K$, the sets of vectors
$ (V\oplus \langle \mathbf{u}_i \rangle)\setminus V$
are mutually disjoint. We can pick $\mathbf{u}_1, \mathbf{u}_2,\ldots, \mathbf{u}_K$ sequentially as follows. Firstly, we let $\mathbf{u}_1$ to be any vector in $U\setminus V$. For $1< j \leq K$, we let $\mathbf{u}_j$ to be any vector in
\begin{equation}
 U \setminus \Big( \bigcup_{i=1}^{j-1} (V\oplus \langle \mathbf{u}_i\rangle ) \Big).
 \label{eq:nonempty}
\end{equation}
The set in \eqref{eq:nonempty} is non-empty because, by the union bound, we have
\begin{align*}
 \Big| \bigcup_{i=1}^{j-1} (V\oplus \langle \mathbf{u}_i \rangle) \Big|
 &=  |V| + \Big| \bigcup_{i=1}^{j-1} (V\oplus \langle\mathbf{u}_i\rangle ) \setminus V \Big| \\
 &\leq q^{N-2} + (j-1) (q^{N-1}-q^{N-2}) \\
 &< q^{N-2} + (q+1) (q^{N-1}-q^{N-2}) = |U|.
\end{align*}
If $\mathbf{u}_1, \mathbf{u}_2,\ldots, \mathbf{u}_K$ are chosen according to the above procedure, then $(V\oplus \langle \mathbf{u}_i \rangle)\setminus V$ and $(V\oplus \langle\mathbf{u}_j\rangle)\setminus V$ are disjoint for $i\neq j$. Otherwise, if  $(V\oplus \langle\mathbf{u}_i\rangle )\setminus V$ and $(V\oplus \langle\mathbf{u}_j \rangle)\setminus V$ have non-empty intersection for some $i<j$, then we have
\[
  \alpha \mathbf{u}_i + \mathbf{v} = \alpha' \mathbf{u}_j + \mathbf{v}'
\]
for some non-zero scalar $\alpha$ and $\alpha'$ in $GF(q)$ and vectors $\mathbf{v}$ and $\mathbf{v}'$ in $V$, but this implies that
\[
 \mathbf{u}_j = \frac{1}{\alpha'} \big( \alpha \mathbf{u}_i + \mathbf{v} -\mathbf{v}' \big) \in (V\oplus \langle \mathbf{u}_i \rangle)\setminus V,
\]
contradicting the condition in \eqref{eq:nonempty}.
For $k=1,2,\ldots, K=q+1$, we create an instance of the problem of finding an innovative encoding vector by defining $\mathbf{C}_k$ as the $(N-1) \times N$ matrix whose row vectors are $\mathbf{v}_1, \ldots, \mathbf{v}_{N-2}$, and $\mathbf{u}_k$. The row spaces corresponding to the matrices $\mathbf{C}_k$'s satisfy

\noindent (i) $\text{rank}(\mathbf{C}_k) = N-1$ for all $k$.

\noindent (ii) $\text{rowspace}(\mathbf{C}_i) \cap \text{rowspace}(\mathbf{C}_j) = V$ whenever $i\neq j$.

\noindent  (iii) For $k=1,2,\ldots, q+1$, the sets $\text{rowspace}(\mathbf{C}_k) \setminus V$ are mutually disjoint.

The size of the union of $\text{rowspace}(\mathbf{C}_k)$ is
\begin{align*}
& |V| + \sum_{k=1}^{K} |\text{rowspace}(\mathbf{C}_k) \setminus V| \\
& = q^{N-2} + (q+1)(q^{N-1} - q^{N-2})
  = q^N.
\end{align*}
Hence the rowspaces of $\mathbf{C}_k$'s cover the whole vector space $GF(q)^N$. Any encoding vector we pick from $GF(q)^N$ is not innovative to at least one user.

As an example, we consider the case $q=3$ and $K=4$ and $N=3$. If the encoding matrices are
\[
\begin{bmatrix}
1&1&1\\1&0&0
\end{bmatrix},
\begin{bmatrix}
1&1&1\\0&1&0
\end{bmatrix},
\begin{bmatrix}
1&1&1\\0&0&1
\end{bmatrix},
\begin{bmatrix}
1&1&1\\0&1&2
\end{bmatrix},
\]
then we cannot find any innovative encoding vector.

\section{Incremental Method for Computing a Basis of the Null Space of a Given Matrix}
\label{app:B}
In this appendix, we illustrate how  to compute a basis of the null space {\em incrementally}.
In the application to the broadcast system we consider
in this paper, the rows of $\mathbf{C}$ are given one by one. A row is revealed after an innovative packet is received.
Given an $r\times N$ matrix $\mathbf{C}$ over $GF(q)$, recall that our objective is to find a basis for the null space of~$\mathbf{C}$.
The idea is as follows.
We first extend $\mathbf{C}$ to an $N\times N$ matrix by appending $N-r$ row vectors. These vectors are chosen in a way such that the resulting matrix, denoted by $\tilde{\mathbf{C}}$, is non-singular. Let $\tilde{\mathbf{B}}$ be the inverse of $\tilde{\mathbf{C}}$. By the very definition of matrix inverse, the last $N-r$ columns of $\tilde{\mathbf{B}}$ is a basis for the null space of $\mathbf{C}$.


We proceed by induction. The algorithm is initialized by setting  $\tilde{\mathbf{C}}=\tilde{\mathbf{B}}=\mathbf{I}_{N}$.
We will maintain the property that $\tilde{\mathbf{C}}^{-1} = \tilde{\mathbf{B}}$.

Suppose that the first $r$ rows of $\tilde{\mathbf{C}}$ are the encoding vectors received by a user, and  $\tilde{\mathbf{C}} = \tilde{\mathbf{B}}^{-1}$. We let $\mathbf{c}_{i}^T$ be the $i$-th row of $\tilde{\mathbf{C}}$ and $\mathbf{b}_{j}$ be the $j$-th column of $\tilde{\mathbf{B}}$.
When a packet arrives, we can check whether it is innovative by taking the inner product of the encoding vector of the new packet, say $\mathbf{w}$, with $\mathbf{b}_{r+1}$, $\mathbf{b}_{r+2}, \ldots, \mathbf{b}_N$. According to Lemma~\ref{th:characterization}, it is innovative to that user if and only if one or more of such inner products are non-zero.


Consider the case that $\mathbf{w}$ is innovative. Permute the columns of $\tilde{\mathbf{B}}$, if necessary, to ensure that $\mathbf{w}^T \mathbf{b}_{r+1}\neq 0$. This can always be done, since $\mathbf{w}$ cannot be orthogonal to all the last $N-r$ columns of $\tilde{\mathbf{B}}$.
Permute the rows of $\tilde{\mathbf{C}}$ accordingly, so as to ensure that $\tilde{\mathbf{C}}^{-1} = \tilde{\mathbf{B}}$.

We are going to modify $\tilde{\mathbf{C}}$ by updating its $(r+1)$-st row to $\mathbf{w}^T$. This operation can be expressed algebraically by
\begin{equation} \label{updateC}
\tilde{\mathbf{C}} \longleftarrow \tilde{\mathbf{C}} + \mathbf{e}_{r+1} (\mathbf{w} - \mathbf{c}_{r+1})^T,
\end{equation}
where $\mathbf{e}_{r+1}$ is the column vector with the $(r+1)$-st component equal to 1 and 0 otherwise. The matrix $\mathbf{e}_{r+1} (\mathbf{w} - \mathbf{c}_{r+1})^T$ is a rank-one matrix, with the $(r+1)$-st row equal to $(\mathbf{w} - \mathbf{c}_{r+1})^T$, and 0 everywhere else.
The inverse of $\tilde{\mathbf{C}} + \mathbf{e}_{r+1} (\mathbf{w} - \mathbf{c}_{r+1})^T$ can be computed efficiently by the Sherman-Morrison formula~\cite{NR} \cite[p.18]{HornJohnson},
\begin{align}
 & \phantom{=} (\tilde{\mathbf{C}} + \mathbf{e}_{r+1} (\mathbf{w} - \mathbf{c}_{r+1})^T)^{-1} \notag \\
&= \tilde{\mathbf{C}}^{-1} -
\frac{\tilde{\mathbf{C}}^{-1} \mathbf{e}_{r+1} (\mathbf{w} - \mathbf{c}_{r+1})^T \tilde{\mathbf{C}}^{-1}  }{1+(\mathbf{w} - \mathbf{c}_{r+1})^T \tilde{\mathbf{C}}^{-1} \mathbf{e}_{r+1}} \notag \\
&= \tilde{\mathbf{C}}^{-1} -
\frac{ \mathbf{b}_{r+1} (\mathbf{w} - \mathbf{c}_{r+1})^T \tilde{\mathbf{C}}^{-1}  }{\mathbf{w}^T \mathbf{b}_{r+1}} \notag  \\
&= \tilde{\mathbf{C}}^{-1} -
\frac{ \mathbf{b}_{r+1} (\mathbf{w}^T \tilde{\mathbf{C}}^{-1} - \mathbf{e}_{r+1}^T) }{\mathbf{w}^T \mathbf{b}_{r+1}}. \label{eq:SM}
\end{align}
We have used the facts that $\tilde{\mathbf{C}}^{-1} \mathbf{e}_{r+1} = \mathbf{b}_{r+1}$ and $\mathbf{c}_{r+1}^T\tilde{\mathbf{C}}^{-1} = \mathbf{e}_{r+1}^T$ in the above equations. The denominator of the fraction in~\eqref{eq:SM} is a non-zero scalar by construction, so that division of zero would not occur.

The updating procedure can now be performed. $\tilde{\mathbf{C}}$ is updated according to~\eqref{updateC} and $\tilde{\mathbf{B}}$ is updated as follows:
\begin{equation}
\tilde{\mathbf{B}} \longleftarrow \tilde{\mathbf{B}} -
\frac{ \mathbf{b}_{r+1} (\mathbf{w}^T \tilde{\mathbf{B}} - \mathbf{e}_{r+1}^T) }{\mathbf{w}^T \mathbf{b}_{r+1}}.
\end{equation}
Note that if  $\mathbf{w}$ is $\omega$-sparse, the multiplication of $\mathbf{w}^T$ and $\tilde{\mathbf{C}}^{-1}$ in~\eqref{eq:SM} can be done in $O(\omega N)$ times.

\bibliographystyle{IEEEtran}



\end{document}